\definecolor{dartmouthgreen}{rgb}{0.05, 0.5, 0.06}
\definecolor{ceruleanblue}{rgb}{0.16, 0.32, 0.75}
\newtheorem{theorem}{Theorem}[section]
\newtheorem{lemma}[theorem]{Lemma}
\newtheorem{definition}[theorem]{Definition}
\newtheorem{proposition}[theorem]{Proposition}
\newtheorem{claim}[theorem]{Claim}
\newtheorem*{problem*}{Problem}
\newtheorem*{remark*}{Remark}
\numberwithin{equation}{section}
\numberwithin{table}{section}
\newcommand{\poly}{\operatorname{poly}}
\newcommand{\junk}[1]{}
\newcommand{\TV}{{\rm TV}}
\newcommand{\norm}[1]{\left\lVert#1\right\rVert}
\newcommand{\vertiii}[1]{{\left\vert\kern-0.25ex\left\vert\kern-0.25ex\left\vert #1 \right\vert\kern-0.25ex\right\vert\kern-0.25ex\right\vert}}
\def\b1{{\bf 1}}
\def\eps{{\epsilon}}
\title{A Combinatorial Characterization of Constant Mixing Time}
\author{
 Lap Chi Lau\footnote{Cheriton School of Computer Science, University of Waterloo. Supported by an NSERC Discovery Grant. 
 },~~~~~
 Raymond Liu\footnote{Cheriton School of Computer Science, University of Waterloo. Supported by an NSERC Discovery Grant. 
 }}
\date{}
\begin{document}

\begin{titlepage}
\def\thepage{}
\thispagestyle{empty}

\maketitle

\begin{abstract}
Classical spectral graph theory characterizes graphs with logarithmic mixing time.
In this work, we present a combinatorial characterization of graphs with constant mixing time.
The combinatorial characterization is based on the small-set bipartite density condition,
which is weaker than having near-optimal spectral radius and is stronger than having near-optimal small-set vertex expansion.
\end{abstract}

\end{titlepage}

\thispagestyle{empty}


\newpage

\section{Introduction}

We start with a brief review of some background in random walks and spectral graph theory.
Let $G=(V,E)$ be a $d$-regular graph and $n := |V|$.
Let $P$ be the random walk matrix of $G$, with stationary distribution $\pi=\vec{1}/n$ as $G$ is regular.  
The $\eps$-mixing time of the random walks is defined as
\[
\tau_\eps(P) := \min \Big\{ t ~\Big|~ \frac12\norm{P^t p_0 - \pi}_1 \leq \eps \textrm{ for any initial distribution } p_0\Big\}.
\]
Let $1 = \lambda_1 \geq \lambda_2 \geq \cdots \geq \lambda_n \geq -1$ be the eigenvalues of $P$ and $\lambda := \max\{\lambda_2,|\lambda_n|\}$ be the spectral radius of $P$.
The graph $G$ is called a spectral expander if $\lambda$ is a constant strictly smaller than one.
Standard spectral analysis~\cite{LPW09} shows that 
\begin{equation} \label{e:TV}
 \frac12\norm{P^t p_0 - \pi}_1 \leq \lambda^t \cdot \norm{p_0}_2 \cdot \sqrt{n}.
\end{equation}
This implies that the mixing time of random walks on a spectral expander is $O(\log n)$,
and this upper bound is optimal as the diameter of the graph is $\Omega(\log n)$ when $d$ is a constant.
Cheeger's inequality states that $G$ is a spectral expander if and only if $G$ is a combinatorial expander (i.e., with constant edge conductance); see \autoref{s:prelim}.
This gives a combinatorial characterization of graphs with $O(\log n)$ mixing time.
The relations between eigenvalues, combinatorial expansion, and mixing time are fundamental results in spectral graph theory.

Inspired by the recent development in constant-hop expander graphs (see \cite{HRG22,HHG25} and the references therein), 
where the focus is on sending multicommodity flows using paths of {\em constant} length, 
we are interested in characterizing graphs with constant mixing time, as these form a nice class of constant-hop expander graphs.
For a $d$-regular graph to have constant diameter, a necessary condition is that $d \geq n^\xi$ for some small constant $\xi > 0$, so we focus on graphs in this moderately-dense regime as in~\cite{HRG22,HHG25}. 
Even in this regime, it is not difficult to construct spectral expanders with $\Omega(\log n)$ diameter, so we need to look for stronger conditions to guarantee constant mixing time.
We say a $d$-regular graph has an {\em inverse-polynomial spectral radius} if $\lambda \lesssim 1/d^c$ for some constant $c \in (0,\frac12]$.  
From \eqref{e:TV}, observe that $n^\xi$-regular graphs with $\lambda \lesssim 1/d^c$ have constant mixing time $O(1/(c \xi))$. 
Graphs with inverse-polynomial spectral radius also exhibit stronger combinatorial expansion properties:
Tanner's theorem implies that such graphs have vertex expansion $\Omega(d^{2c})$ for sets of size $O(n/d^{2c})$; see \autoref{s:prelim}.
Graphs in these regimes satisfy interesting properties but were not explored much before.

These lead us to study the relations between inverse-polynomial spectral radius, small-set vertex expansion, and constant mixing time.
One natural question is whether an $n^\xi$-regular graph with near-optimal small-set vertex-expansion has constant mixing time.
In \autoref{s:SSE}, we provide a negative example to this question, which suggests that an even stronger combinatorial condition is required to guarantee constant mixing time.

Another natural question is whether inverse-polynomial spectral radius is necessary to guarantee constant mixing time.
To answer this question, 
we consider a combinatorial characterization of the spectral radius through the expander mixing lemma~\cite{AC88}:
If a $d$-regular graph $G=(V,E)$ has spectral radius $\lambda$, 
then
\begin{equation} \label{e:EML}
\bigg| \big|E(S,T)\big| - \frac{d|S||T|}{n} \bigg| \leq \lambda d \sqrt{|S||T|}
\quad \textrm{ for any } S, T \subseteq V,
\end{equation}
where $E(S,T) := \{(u,v) \mid u \in S, v \in T, uv \in E\}$ is the set of ordered edges where $u \in S$ and $v \in T$.
The converse of the expander mixing lemma by Bilu and Linial~\cite{BL06} shows that if \eqref{e:EML} is satisfied for all disjoint $S, T \subseteq V$,
then the graph has spectral radius $O(\lambda \cdot \log(1 + \frac{1}{\lambda}))$.
Thus, if \eqref{e:EML} is satisfied for $\lambda \leq O(1/d^c)$ for some constant $c \in (0, \frac12]$ and $d \geq n^\xi$ for some constant $\xi > 0$, then the graph has constant mixing time.
This provides a combinatorial sufficient condition for constant mixing time, but we will show that it is not a necessary condition.

\subsection{Our Results}

We show that only the upper bounds in \eqref{e:EML} are needed for constant mixing time, 
simplifying \eqref{e:EML} to a condition about the bipartite density between two sets.

\begin{definition}[$\alpha$-Bipartite Density] \label{d:bipartite-density}
Let $G=(V,E)$ be a $d$-regular graph.
For any $\alpha\in (\sqrt d,d]$, we say that $G$ satisfies the $\alpha$-bipartite density condition if
\begin{equation}\label{e:bipartite-density}
     |E(S,T)| \leq \frac{d|S||T|}{n}+\alpha\sqrt{|S||T|} \quad \textrm{ for all } S,T \subseteq V.
\end{equation}
If \eqref{e:bipartite-density} is only satisfied for sets $S,T$ with $|S|,|T| \leq \delta n$ for some $\delta\leq 1$, then we say that $G$ satisfies the $\delta$-small-set $\alpha$-bipartite density condition.
\end{definition}

We note that the $\alpha$-bipartite density condition is weaker than having spectral radius $\alpha/d$ by \eqref{e:EML}, but it is stronger than having near-optimal small-set vertex expansion (see \autoref{s:SSE}).

Our main results are that this condition implies fast mixing time.

\begin{theorem}[Upper Bounding Mixing Time by Bipartite Density] \label{t:upper-bound}
Let $G=(V,E)$ be a $d$-regular graph with $d = n^\xi$ for some constant $\xi > 0$.
If $G$ satisfies the $\alpha$-bipartite density condition for $\alpha\lesssim d/(\log d)^2$, then
\[
\tau_{1/n}(P) \lesssim \Big(\frac{\log n}{\log (d/\alpha)}\Big)^2.
\]
In particular, if $\alpha=d^{1-c}$ for some constant $c\in (0,\frac12]$, 
this implies constant mixing time such that
\[
\tau_{1/n}(P)\lesssim \frac{1}{c^2 \xi^2}.
\]
\end{theorem}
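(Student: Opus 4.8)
The plan is to track the ``density function'' $f_t := n\cdot P^t p_0$, which satisfies $f_t \ge 0$, $\sum_v f_t(v)=n$, and $\|P^tp_0-\pi\|_1 = \tfrac1n\|f_t-\vec 1\|_1$. Since the $\ell_1$‑distance is non‑increasing along the walk and $\|f_t-\vec1\|_1 \le \sqrt n\,\|f_t-\vec 1\|_2$, it suffices to drive $f_t$ to within $1/\mathrm{poly}(n)$ of $\vec 1$ in $\ell_2$ (equivalently $\ell_\infty$). The whole argument is a quantitative study of how fast the level sets of $f_t$ flatten out.

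The workhorse is a one‑step ``level‑set shrinkage'' estimate obtained by feeding the $\alpha$‑bipartite density condition carefully chosen pairs $(S,T)$. If $f_{t+1}(v)\ge\lambda$ then $\sum_{u\in N(v)}f_t(u)\ge\lambda d$; decomposing this sum dyadically over the super‑level sets $L_k=\{u:f_t(u)\ge 2^k\}$ of $f_t$ and pigeonholing, such a $v$ satisfies $|N(v)\cap L_k|\gtrsim \lambda d/(2^k\log n)$ for some scale $k$. Partitioning $\{v:f_{t+1}(v)\ge\lambda\}$ by the witnessing scale and applying \eqref{e:bipartite-density} to each pair (witness set, $L_k$) bounds the relevant edge count in two ways; because $\alpha\ll d$ forces the $\alpha\sqrt{|S||T|}$ term to dominate the ``expected'' term $d|S||T|/n$ for $\lambda$ not too small, this yields a bound of the shape $|\{f_{t+1}\ge\lambda\}|\lesssim \operatorname{polylog}(n)\cdot\frac{\alpha^2}{\lambda^2 d^2}\,\|f_t\|_2^2$. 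Integrating the layer‑cake representation of $\|f_{t+1}\|_2^2$ then gives a recursion of the form $\|f_{t+1}-\vec 1\|_2^2 \lesssim \rho\,\|f_t-\vec 1\|_2^2 + (\text{floor})$ with $\rho = \operatorname{polylog}(n)\cdot(\alpha/d)^2$; the hypothesis $\alpha\lesssim d/(\log d)^2$ is precisely what makes $\log(1/\rho)=\Theta(\log(d/\alpha))$, so that $\Theta(\log n/\log(d/\alpha))$ iterations of such a contraction multiply the potential by $\le 1/n$.

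I would then run this in phases: in each phase of $\Theta(\log n/\log(d/\alpha))$ steps the recursion contracts the potential, which simultaneously drops the ``$\ell_\infty$‑exponent'' $\log_n\|f_t\|_\infty$ by an additive $\Theta(\log(d/\alpha))$ — i.e.\ by $\Theta(1/L)$ where $L:=\log n/\log(d/\alpha)$ — and correspondingly reduces the number of dyadic levels that appear, which improves the estimate used in the next phase. After $\Theta(L)$ phases one reaches $\|f_t-\vec1\|_2 \lesssim n^{-1/2}$, for a total of $\Theta(L^2)$ steps, which is the claimed bound; in particular $\alpha=d^{1-c}$ gives $L\asymp 1/(c\xi)$ and $\tau_{1/n}\lesssim 1/(c^2\xi^2)$. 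The main obstacle — where I expect most of the technical effort to lie — is the ``bulk'' regime where $f_t$ is already nearly uniform: there the crude level‑set bounds and the trivial floor in the recursion stop improving, and one must exploit the density condition differently, most importantly to rule out the near‑bipartite / almost‑$2$‑periodic obstruction (the combinatorial analogue of an eigenvalue near $-1$), since otherwise the walk need not mix at all. I would handle this either by a Cheeger‑type rounding argument showing that $\alpha\ll d$ forces the bipartiteness ratio to be bounded away from $0$ (controlling the decay rate of the oscillating part of $f_t$), or by running the level‑set argument on the two‑step walk $P^2$, combined with a Bilu--Linial‑style random‑partition bound on $\lambda_2$ using only the upper inequality in \eqref{e:bipartite-density}. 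A final bookkeeping step verifies that all polylogarithmic factors affect only lower‑order terms and leave the $(\log n/\log(d/\alpha))^2$ dependence intact.
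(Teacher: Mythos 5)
Your proposal correctly identifies the right potential function (the $\ell_2$-norm of the walk distribution) and the right tool (feeding carefully chosen set pairs into the bipartite density condition to bound how much mass can pile up in a small set after one step), and your phase count $\Theta(L^2)$ with $L=\log n/\log(d/\alpha)$ matches the theorem. But the execution diverges from the paper's in two places, and the second divergence is a genuine gap.

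First, the decomposition: you use dyadic level sets $L_k$ and pigeonhole over scales, incurring $\polylog(n)$ losses and forcing a case split between ``far from uniform'' and ``near uniform.'' The paper instead sorts the entries of $Pp$, lets $T$ be the top-$k$ entries, applies a rearrangement inequality to bound $\sum_{j\le k}Pp(\sigma(j))$ by $\frac1d\sum_i p(i)\,d_T'(i)$ with $d_T'$ the sorted neighbor-count sequence, and then bounds $d_T'(r)$ via an averaging argument applied to the cumulative sums in \eqref{e:bipartite-density}. This yields a pointwise bound $Pp(\sigma(k))\le \frac1n+\frac{\bar\alpha}{\sqrt{k\delta n}}+\frac{\bar\alpha}{\sqrt k}\sum_i\frac{p(i)}{\sqrt i}$ which, after Cauchy--Schwarz, gives the one-step contraction $\|Pp\|_2^2\le C_\delta/n + O(\sqrt{\bar\alpha\log d/\xi})\,\|p\|_2^2$ in \autoref{p:2-Norm Bound}. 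This handles all distributions uniformly, including the nearly-uniform ones: the $\frac1n$ term (coming from the ``expected'' part $d|S||T|/n$) is what produces the floor $C_\delta/n$, and the near-bipartite obstruction you worry about is automatically ruled out --- a near-bipartition would force large $d_T(i)$ for many $i$ with $T$ small, contradicting \eqref{e:bipartite-density}. No Cheeger-type rounding or separate two-periodicity analysis is needed.

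Second, and more seriously: your plan for the final boost from TV $\lesssim (\alpha/d)^{\Theta(1)}$ down to $1/n$ is the gap. The $\ell_2$-contraction, however you implement it, does hit a floor --- $\|P^tp_0\|_2^2-1/n$ bottoms out at something polynomially small in $d/\alpha$ but not at $1/n^2$, so Cauchy--Schwarz alone only gives $d_{\TV}\lesssim(\alpha/d)^{1/8}$ (\autoref{c:steps to C_delta/n} and \autoref{t:VD}). You propose to break this floor by reducing $\|f_t\|_\infty$ over phases, or by a Cheeger/Bilu--Linial argument; neither works here. Cheeger gives a mixing rate of $\Omega(1)$ per step and hence $\Theta(\log n)$ extra steps, which dwarfs $L^2=O(1/(c\xi)^2)$ in the regime $\alpha=d^{1-c}$, $d=n^\xi$. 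Bilu--Linial's converse to the expander mixing lemma needs \emph{both} sides of \eqref{e:EML}; the lower bound (no sparse cuts) is essential for controlling the positive part of the spectrum, and I know of no variant that works from the upper inequality alone. The paper sidesteps the floor with the elementary Markov-chain fact $d_{\TV}(kt)\le(2d_{\TV}(t))^k$ from \eqref{e:standard}: since $d_{\TV}(t)\lesssim(\alpha/d)^{1/8}$ after $t=O(1/(c\xi))$ steps, choosing $k\asymp\log n/\log(d/\alpha)$ drives the TV distance below $1/n$, for a total of $kt\asymp L^2$ steps. This keeps the argument purely combinatorial and, unlike your proposed fixes, actually gives the stated bound. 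You should replace the phase/$\ell_\infty$-exponent bookkeeping and the spectral escape hatches with this submultiplicativity step.
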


The standard definition of mixing time is $\tau_{1/3}(P)$, and a well-known fact~\cite{LPW09} is that $\tau_{1/n}(P) \lesssim \tau_{1/3}(P) \cdot \log n$.
In \autoref{t:upper-bound}, we bound $\tau_{1/n}(P)$ directly without losing the logarithmic factor, matching the constant mixing time result for graphs with inverse-polynomial spectral radius.

Furthermore, we show an improved upper bound on standard mixing time using only the small-set bipartite density condition.

\begin{theorem}[Upper Bounding Mixing Time by Small-Set Bipartite Density] \label{t:weak-mixing-time}
Let $G=(V,E)$ be a $d$-regular graph with $d = n^\xi$ for some constant $\xi > 0$.
If $G$ satisfies the $\delta$-small-set $\alpha$-bipartite density condition for some $\alpha\lesssim d/(\log d)^2$ and $\delta \gtrsim \alpha/d$, then
\[
\tau_{1/3}(P)\lesssim \frac{\log n}{\log(d/\alpha)}.
\]
In particular, if $\alpha=d^{1-c}$ for some constant $c\in (0,\frac12]$
and \eqref{e:bipartite-density} holds for all sets $S,T \subseteq V$ with $|S|,|T| \lesssim n/d^c$, then this implies constant standard mixing time such that
\[
\tau_{1/3}(P)\lesssim \frac{1}{c \xi}.
\]
\end{theorem}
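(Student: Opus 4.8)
The plan is to follow the classical $\ell_2$-contraction template for mixing, but to replace the (unavailable) spectral bound by a one-step \emph{level-set} inequality extracted from the bipartite density condition. Since $\tfrac12\norm{P^tp_0-\pi}_1$ is convex in $p_0$ over the simplex, it is maximized at a vertex, so it suffices to bound the mixing from a point mass: fix $p_0=\one_v$, write $p_t=P^tp_0$, and for $s>0$ set $S_t^{(s)}:=\{u:p_t(u)>s\}$. The first step is the core one-step estimate: for $s$ bounded away from $1/n$ (concretely $s\gtrsim 1/(\delta n)$), $\;|S_{t+1}^{(s)}|\lesssim \alpha^2(\log n)^2\norm{p_t}_2^2/(d^2s^2)$, while for smaller $s$ only the trivial $|S_{t+1}^{(s)}|<1/s$ is available. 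To prove this, write $T:=S_{t+1}^{(s)}$ and note $s\,d\,|T|<\sum_v p_t(v)\deg_T(v)$, where $\deg_T(v)$ is the number of neighbours of $v$ in $T$; split the vertices by whether $\deg_T(v)$ exceeds a threshold $\theta d$. The low-degree part contributes at most $\theta d$; for the high-degree set $R:=\{v:\deg_T(v)>\theta d\}$ we have $|E(R,T)|>\theta d|R|$, and applying the $\delta$-small-set $\alpha$-bipartite density condition to $R$ and $T$ (or to a subset of $R$ of size $\delta n$ when $R$ is large, using $|T|\le 1/s\le\delta n$) forces $|R|$ to be small. Decomposing $p_t$ into its level sets and invoking the bipartite density condition layerwise bounds $\sum_{v\in R}p_t(v)$ by $\lesssim \alpha(\log n)\norm{p_t}_2$, and optimizing over $\theta$ yields the stated estimate.

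Next I would integrate $\norm{p_{t+1}}_2^2=\int_0^\infty 2s\,|S_{t+1}^{(s)}|\,ds$ using the two regimes above. The medium range $s\le\Theta(1/(\delta n))$ contributes $O(1/(\delta n))$ (here the hypothesis $\delta\gtrsim\alpha/d$ enters, controlling $1/(\delta n)\lesssim d/(\alpha n)$), and the high range contributes $\lesssim \alpha^2(\log n)^3\norm{p_t}_2^2/d^2$. This produces a recursion $\norm{p_{t+1}}_2^2\le C/(\delta n)+\rho\norm{p_t}_2^2$ with $\rho=\Theta(\alpha^2(\log n)^3/d^2)$; the hypothesis $\alpha\lesssim d/(\log d)^2$ is precisely what makes $\rho=o(1)$. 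Iterating from $\norm{p_0}_2^2=1$, after $T_0=O(\log n/\log(1/\rho))=O(\log n/\log(d/\alpha))$ steps one reaches $\norm{p_{T_0}}_2^2=O(1/(\delta n))$.

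To convert this to total variation, write $\tfrac12\norm{p_T-\pi}_1=\int_{1/n}^\infty|S_T^{(s)}|\,ds$ and split again at $\Theta(1/(\delta n))$. The high part is controlled by the core estimate and is $O(\alpha^2(\log n)^2/d^2)=o(1)$ by $\alpha\lesssim d/(\log d)^2$. The delicate part is the medium range $s\in[1/n,\Theta(1/(\delta n))]$, where the condition does not apply and the trivial $|S_T^{(s)}|<1/s$ only gives the useless $O(\log(1/\delta))$. To close this, I would carry along the walk a quantitative invariant on the number of heavy vertices $H_t:=|\{u:p_t(u)>1/n\}|$: it is trivially small for a point mass ($H_0=1$), and it propagates ($H_{t+1}\ll\delta n$ from $H_t\ll\delta n$) via the same degree-splitting argument applied at level $s=1/n$, now using the trivial $|E(S,T)|\le d\min\{|S|,|T|\}$ on the intermediate-size layers of $p_t$ together with Chebyshev's bound $|S_t^{(s')}|\le\norm{p_t}_2^2/s'^2$, and absorbing the errors using both $\delta\gtrsim\alpha/d$ and $\alpha\lesssim d/(\log d)^2$. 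Granting a sufficiently strong such invariant, the medium range contributes at most $H_T\cdot\Theta(1/(\delta n))=o(1)$, and summing the three ranges gives $\tfrac12\norm{p_T-\pi}_1<1/3$ with $T=O(\log n/\log(d/\alpha))$. Specializing $\alpha=d^{1-c}$, $\delta=\Theta(n/d^c)$ gives $\log(d/\alpha)=c\log d=c\xi\log n$, hence $T=O(1/(c\xi))$.

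The main obstacle I anticipate is exactly this last point: because the bipartite density condition is assumed only up to size $\delta n$, level sets of both $p_t$ and $p_{t+1}$ of intermediate size $[\delta n,n]$ must be handled by crude means, and one must show these crude bounds are still strong enough. Concretely, establishing the heavy-vertex invariant with a polynomial improvement over the trivial $H_t<n$ in the relevant regime of $\delta$ (and checking that no further iteration of the $\ell_2$ recursion, whose floor sits at $\Theta(1/(\delta n))$, can substitute for it) seems to be where the real work lies; the rest is bookkeeping of the kind already present in the proof of \autoref{t:upper-bound}.
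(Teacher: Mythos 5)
Your overall plan --- iterate an $\ell_2$ contraction extracted from the bipartite density condition, then pass to total variation via Cauchy--Schwarz --- is the same strategy the paper uses, and your level-set estimate plays the role of their ``$2$-norm decrease'' proposition. The genuine gap is in the floor of your $\ell_2$ recursion. You obtain $\|p_{t+1}\|_2^2 \le C/(\delta n) + \rho\|p_t\|_2^2$, because in the range $s \lesssim 1/(\delta n)$ you fall back on the trivial bound $|S_{t+1}^{(s)}| < 1/s$. Iterating this only brings you to $\|p_T\|_2^2 = O(1/(\delta n))$, and Cauchy--Schwarz then gives $d_{\TV}(p_T,\pi) \le \tfrac{\sqrt n}{2}\sqrt{\|p_T\|_2^2 - 1/n} = O(1/\sqrt\delta)$, which is $\gg 1$ precisely in the regime of interest, where $\delta$ can be as small as $\Theta(\alpha/d) = \Theta(d^{-c})$. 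You recognize this and propose to carry the heavy-vertex invariant $H_t := |\{u : p_t(u) > 1/n\}| \ll \delta n$, but this invariant is false: for a nearly-mixed distribution, on the order of $n/2$ coordinates sit just above $1/n$ (e.g.\ $p(u) = 1/n \pm 1/n^2$ on the two halves of $V$ has $H = n/2$ yet $d_{\TV} = 1/(2n)$), so $H_t$ cannot remain below $\delta n$ for $\delta < 1/2$. An invariant that must break before the walk mixes cannot be what closes the argument.

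The paper instead pushes the floor of the recursion down to $O(1/n)$, and the mechanism is worth pinpointing. They sort both $p$ and $Pp$; applying the $\delta$-small-set condition to the top $\delta n$ entries of $Pp$ and averaging the sorted row-counts (yielding $d_T'(r) \le d|T|/n + \alpha\sqrt{|T|/r}$ for $r \le \delta n$) shows the $\delta n$-th largest entry of $Pp$ satisfies $Pp(\sigma(\delta n)) \le (1+\gamma)/n + H_2$, where $\gamma := (\alpha/d)/\delta$ and $H_2$ is proportional to $\|p\|_2$. Monotonicity then bounds \emph{every} entry past index $\delta n$ by the same quantity, so squaring and summing $(1-\delta)n$ such terms contributes $(1-\delta)(1+\gamma)^2/n$ plus drop-rate terms to $\|Pp\|_2^2$. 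This is exactly where the hypothesis $\delta \gtrsim \alpha/d$ earns its keep: it keeps $\gamma = O(1)$ (the paper arranges $\gamma \le \sqrt5/2 - 1$, so $(1+\gamma)^2 \le 5/4$), making the floor $C_\delta/n$ with $C_\delta = \delta + (1-\delta)\cdot\tfrac54 \le \tfrac54$, whence $d_{\TV} \le \tfrac12\sqrt{C_\delta - 1 + o(1)} \le \tfrac14\sqrt{1-\delta} + o(1) < \tfrac13$. In your level-set language, the correct split point is $s^\star \approx (1+\gamma)/n = \Theta(1/n)$, not $\Theta(1/(\delta n))$; the fact you are missing is that $|S_{t+1}^{(s)}| \le \delta n$ already for $s$ slightly above $(1+\gamma)/n$, and this comes out of the sorted-averaging argument rather than from applying the density condition to $R$ and $T$ directly.
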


We remark that constant standard mixing time implies the graph is a constant-hop expander graph. 
It is our hope that the small-set bipartite density condition can lead to a simpler cut-matching game for constructing constant-hop expanders.

We also establish a lower bound on the mixing time using $\delta$-small-set $\alpha$-bipartite density. 
The following theorem states that the existence of a dense bipartite structure between two small sets implies slow mixing time.

\begin{theorem}[Lower Bounding Mixing Time by Bipartite Density] \label{t:lower-bound}
Let $G$ be a $d$-regular graph. 
If there exist $S,T\subseteq V$ such that 
\[
|E(S,T)|\geq \frac{d|S||T|}{n}+\alpha\sqrt{|S||T|}
\quad \textrm{and} \quad
|S|,|T|\leq \delta n,
\] 
then
\[
\tau_{1/n}(P)\gtrsim \frac{\log(1/\delta)}{\log(d/\alpha)}.
\]
In particular, if there are two small sets with high bipartite density such that $|S|,|T| \leq n^{1-\eps}$ for some constant $\eps$ and $\alpha = \Omega(d/(\poly\log d))$, 
then the graph has non-constant mixing time such that $\tau_{1/n}(P) \gtrsim (\log n)/\log \log d$.
\end{theorem}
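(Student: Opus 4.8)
The plan is to exhibit, for a suitable starting distribution, a slowly-decaying obstruction to mixing coming directly from the dense bipartite structure. Concretely, I would start the walk from the uniform distribution $p_0 := \mathbf 1_S/|S|$ on $S$ and track the return probability $(P^{t}p_0)(S)$: I will show this quantity equals $1$ at $t=0$ but decays only like $(\alpha/d)^{2}$ per two steps, whereas it must drop below $\pi(S)+1/n \le 2\delta$ before time $\tau_{1/n}(P)$; comparing these two facts forces $\tau_{1/n}(P) \gtrsim \log(1/\delta)/\log(d/\alpha)$.

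For the setup, write $u := \mathbf 1_S$, $w := \mathbf 1_T$, and $\beta := |E(S,T)|/(d\sqrt{|S||T|})$. Since $G$ is regular, $P$ is symmetric, so a one-line computation gives $\langle u, Pw\rangle = |E(S,T)|/d$, and the hypothesis $|E(S,T)| \ge d|S||T|/n + \alpha\sqrt{|S||T|}$ gives $\beta \ge \sqrt{|S||T|}/n + \alpha/d \ge \alpha/d$. Moreover, by symmetry of $P$, $(P^{2k}p_0)(S) = \tfrac1{|S|}\langle u, P^{2k}u\rangle = \tfrac1{|S|}\norm{P^k u}_2^2$.

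The core step is to lower bound $f(k) := \norm{P^k u}_2^2$. Writing $f(k) = \langle P^{k-1}u, P^{k+1}u\rangle$ and applying Cauchy--Schwarz yields the log-convexity inequality $f(k)^2 \le f(k-1)\,f(k+1)$, so the ratios $f(k+1)/f(k)$ are nondecreasing (for $k\ge 1$) and $f(k) \ge f(0)\,(f(1)/f(0))^{k}$. Here $f(0) = \norm{u}_2^2 = |S|$ and $f(1) = \norm{Pu}_2^2 \ge \langle Pu,w\rangle^2/\norm{w}_2^2 = (|E(S,T)|/d)^2/|T| = \beta^2|S|$, hence
\[
(P^{2k}p_0)(S) \;=\; \tfrac{1}{|S|}\, f(k) \;\ge\; \beta^{2k} \;\ge\; (\alpha/d)^{2k}.
\]
Now using $\pi(S) = |S|/n \le \delta$ and $1/n \le \delta$ (as $S \neq \emptyset$), together with the variational identity $\tfrac12\norm{\mu-\nu}_1 = \max_A\big(\mu(A)-\nu(A)\big)$, the total variation distance of $P^{2k}p_0$ to $\pi$ is at least $\beta^{2k}-\delta$. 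Taking $k$ as large as possible subject to $2k\log(d/\alpha) < \log(1/(2\delta))$ guarantees $\beta^{2k} > 2\delta$ (since $\beta \ge \alpha/d$), hence $\tfrac12\norm{P^{2k}p_0-\pi}_1 > \delta \ge 1/n$ and therefore $\tau_{1/n}(P) > 2k \gtrsim \log(1/\delta)/\log(d/\alpha)$. The ``in particular'' clause then follows by substituting $\delta = n^{-\eps}$ and $d/\alpha = \poly\log d$, giving $\log(1/\delta) = \eps\log n$ and $\log(d/\alpha) = O(\log\log d)$.

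I expect the main work to be the core step: recognizing that the $S$-return probability is the right potential and that it decays log-convexly with the sharp rate $\beta^2$ per two steps; the Cauchy--Schwarz argument handles this cleanly once set up. The only remaining care concerns boundary regimes --- for instance $\alpha$ close to $d$, so that $\log(d/\alpha)$ is tiny, or $S$ and $T$ overlapping --- which are either vacuous or harmlessly absorbed into the $\gtrsim$ notation.
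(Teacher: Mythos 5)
Your proof is correct, and it takes a genuinely different route from the paper's. The paper's proof (via Lemma 4.1) first passes to \emph{minimal} sets $S,T$ satisfying the bipartite density condition, uses the minimality to extract a uniform minimum-degree condition $d_T(v)\geq\frac{\alpha}{2}\sqrt{|T|/|S|}$ for every $v\in S$ (and symmetrically), concludes $d_{\min}\geq\alpha^2/(4d)$, and then tracks per-vertex probabilities $P^tU_S(v)\geq(d_{\min}/d)^t/|S|$ by an inductive bounce-back argument. Your proof instead tracks the aggregate return probability $(P^{2k}U_S)(S)=\|P^k\mathbf 1_S\|_2^2/|S|$ and uses the log-convexity of $f(k)=\|P^k\mathbf 1_S\|_2^2$ (from self-adjointness of $P$ plus Cauchy--Schwarz) together with the one-step lower bound $f(1)\geq(|E(S,T)|/d)^2/|T|\geq(\alpha/d)^2|S|$ to conclude $(P^{2k}U_S)(S)\geq(\alpha/d)^{2k}$. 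Both approaches then finish with the variational TV lower bound via the set $S$. Your route avoids the minimality argument and the per-vertex degree analysis entirely, is shorter, and in fact gives a quantitatively sharper per-step decay rate ($(\alpha/d)^2$ per two steps versus the paper's $(\alpha/(2d))^2$ per step, i.e.\ $(\alpha/(2d))^4$ per two steps), though the final asymptotic bound $\tau_{1/n}(P)\gtrsim\log(1/\delta)/\log(d/\alpha)$ is the same. The paper's argument is more elementary in that it uses only counting, while yours leverages standard Hilbert-space facts; yours is arguably the cleaner formulation. One small point worth making explicit: you lower bound TV distance only at even times $t=2k$, and should invoke the standard monotonicity $d_{\TV}(t+1)\leq d_{\TV}(t)$ (the walk cannot get farther from stationarity) to conclude $\tau_{1/n}(P)>2k$; the paper's per-vertex argument handles odd times directly.
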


To summarize,
we can view the $\delta$-small-set $\alpha$-bipartite density condition as a loose characterization of constant mixing time: 
If all sets of large enough size have low bipartite density, then the graph has constant mixing time; if some sets of small enough size have high bipartite density, then the graph has non-constant mixing time.

We think the proof approach in \autoref{t:upper-bound} and \autoref{t:weak-mixing-time} is also interesting that it provides a clean and direct way to upper bound the mixing time using a combinatorial condition, without going through a spectral argument as usual.

\section{Preliminaries}\label{s:prelim}

We write $f \lesssim g$ if $f = O(g)$, $f \gtrsim g$ if $f = \Omega(g)$, and $f \asymp g$ if $f = \Theta(g)$.

We assume the given graph $G=(V,E)$ is a $d$-regular graph throughout this paper,
with $n:=|V|$ vertices and $m:=|E|$ edges.
Let $A$ be the adjacency matrix of $G$,
and let $P:=A/d$ be the normalized adjacency matrix,
which is also the transition matrix of random walks, as $G$ is $d$-regular.

Let $1 = \lambda_1 \geq \lambda_2 \geq \cdots \geq \lambda_n \geq -1$ be the eigenvalues of $P$.
We call $\lambda := \max\{\lambda_2,|\lambda_n|\}$ the spectral radius of $G$.
A well-known result by Alon and Boppana~\cite{Alon-Boppana} establishes that $\lambda \geq 2\sqrt{d-1}/d$ as $n \to \infty$.
A graph is called Ramanujan~\cite{LPS} if $\lambda \leq 2\sqrt{d-1}/d$.

\subsubsection*{Variation Distance and Mixing Time}

The stationary distribution $\pi$ of the transition matrix $P=A/d$ is the uniform distribution $\vec{1}/n$.
The variation distance between any two probability distributions $p, q$ is defined as $d_{\TV}(p, q)=\frac12\norm{p-q}_1$. The variation distance at step $t$ of the random walk is defined as
\[
d_{\TV}(t)=\max_pd_{\TV}(P^tp,\pi),
\]
where the maximum is over all initial probability distributions $p$ on $V$. 
Given $\varepsilon>0$, the $\varepsilon$-mixing time of the random walk is defined as
\[
\tau_{\varepsilon}(P)=\min\{t\mid d_{\TV}(t)\leq \varepsilon\}.
\]

Standard spectral analysis in \eqref{e:TV} yields
$d_{TV}(P^tp,\pi) \lesssim \lambda^t \cdot \norm{p}_2 \cdot \sqrt{n}$.
It follows that the mixing time is upper bounded by $O(\log(n)/(1-\lambda))$,
and so when $G$ is a spectral expander, the mixing time is bounded by $O(\log n)$.

A standard fact in Markov chain \cite{LPW09} shows that for any $k\in\mathbb N$,
\begin{equation}\label{e:standard}
    d_{\TV}(kt)\leq (2d_{\TV}(t))^k.
\end{equation}

In particular, any graph with mixing time $\tau_{1/3}(P)\lesssim\log n$ implies $\tau_{1/n}(P) \lesssim (\log n)^2$.

\subsubsection*{Edge Conductance}

Given an undirected graph $G=(V,E)$ and $S,T \subseteq V$,
define 
\[
E(S,T) := \{(u,v) \mid u \in S, v \in T, uv \in E\},
\]
where an edge with $u,v \in S \cap T$ is counted twice,
as both $(u,v)$ and $(v,u)$ are in $E(S,T)$. The edge boundary of $S$ is defined as $\delta(S):=E(S,V\setminus S)$.

The second eigenvalue $\lambda_2$ is closely related to the edge conductance of the graph,
defined as
\[
\phi(G) = \min_{S \subseteq V : |S| \leq n/2} \frac{|\delta(S)|}{d |S|}.
\]
Cheeger's inequality~\cite{Cheeger, MR782626, MR875835} states that

\begin{equation}\label{e:cheeger}
    \frac{1}{2} ( 1- \lambda_2 ) \leq \phi(G) \leq \sqrt{2(1-\lambda_2)}.
\end{equation}
The edge conductance characterizes the mixing time of constant degree graphs: 
\[
\frac{1}{\phi(G)}\lesssim \tau_{1/3}(W)\lesssim \frac{\log n}{\phi(G)^2},
\]
where $W=\frac12(I+P)$ is the lazy random walk. We only consider non-lazy random walk $P$ in this paper. Since for any initial distribution of the form $p=\chi_v$, where $\chi_v$ is the indicator vector of a vertex $v\in V$, a staying probability of $\frac12$ automatically implies $\tau_{1/n}(P)$ is non-constant.

\subsubsection*{Graphs with Inverse-Polynomial Spectral Radius}

A graph $G$ exhibits stronger probabilistic and combinatorial properties when it has an inverse-polynomial spectral radius such that $\lambda \lesssim 1/d^c$ for some constant $c \in (0,1/2]$.

For random walks, it follows from \eqref{e:TV} that if $d=n^\xi$ for some constant $\xi > 0$, then the mixing time of a graph with inverse-polynomial spectral radius is a constant, upper bounded by $O(1/(c\xi))$.

A graph with inverse-polynomial spectral radius also has large small-set vertex expansion.
Define the vertex expansion of a set $S \subseteq V$ as
\[
\psi(S) := \frac{|\partial(S)|}{|S|}
\quad \textrm{ where }
\partial(S) := \{ v \notin S \mid \exists u \in S \textrm{ with } uv \in E\}.
\]
Tanner~\cite{Tanner} proved that
\begin{equation} \label{e:Tanner}
\psi(S) \geq \Big( \frac{|S|}{n} \big( 1 - \lambda^2 \big) + \lambda^2 \Big)^{-1} - 1.
\end{equation}
In particular, when $G$ is Ramanujan, it has near-optimal small-set vertex expansion such that $\psi(S) = \Omega(d)$ for sets of size up to $\Omega(n/d)$.

We remark that the small-set vertex expansion condition can also be derived from the $\alpha$-bipartite density condition with $\alpha=\lambda d$ (instead of the spectral radius $\lambda$).

\section{Upper Bounds on Mixing Time}\label{s:upper-bound}

In this section, we prove \autoref{t:upper-bound} and \autoref{t:weak-mixing-time}.

We show that for a graph that satisfies the small-set bipartite density condition,
the variation distance to the stationary distribution after $O(1/(c\xi))$ steps of random walks is essentially upper bounded by how close $\delta$ is to 1.

\begin{theorem}[Upper Bounding Variation Distance by Small-Set Bipartite Density]\label{t:VD}
Let $G$ be a $d$-regular graph with $d = n^\xi$ for some constant $\xi > 0$. If $G$ satisfies the $\delta$-small-set $\alpha$-bipartite density condition for some $\alpha\lesssim d/(\log d)^2$, then 
\[
d_{\TV}(t)\leq \frac14\cdot \sqrt{1-\delta} + O\Big(\frac{1}{n^{c\xi/8}}\Big)
\quad {\rm for} \quad
t\geq \frac{4}{c\xi}+1.
\]
where $c:=1-\log_d\alpha$ such that $\alpha=d^{1-c}$.
\end{theorem}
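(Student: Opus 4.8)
The plan is to track how the random walk spreads out mass, measured by the "collision probability" or $\ell_2$-norm of the distribution, and to argue that the small-set bipartite density condition forces the squared $\ell_2$-norm to drop by a polynomial factor in $d$ at every step, until it reaches essentially the uniform level $1/n$. The key quantity is $\|P^t p\|_2^2$, and the key mechanism is a \emph{level-set decomposition}: given a probability vector $q = P^s p$, partition $V$ into level sets $V_j = \{v : 2^{-j-1} < q(v) \le 2^{-j}\}$ according to the dyadic scale of $q(v)$. The mass and the $\ell_2$-contribution of each $V_j$ can be read off from $|V_j|$ and the scale $2^{-j}$. Then $\|Pq\|_2^2 = \frac{1}{d^2}\sum_v (Aq)(v)^2$, and $(Aq)(v)$ counts (weighted) edges from $v$ into the various level sets; bounding $\sum_v (Aq)(v)^2$ reduces to bounding $|E(V_i, V_j)|$ between pairs of level sets, which is exactly what \eqref{e:bipartite-density} controls — provided the level sets are small, i.e. $|V_j| \le \delta n$, which holds for all $j$ except possibly the "near-uniform" bottom levels where $q(v) = O(1/(\delta n))$, and those levels contribute only $O(1/\delta \cdot 1/n) \le O(1)$-ish to the relevant sums and can be bundled into the error term. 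First I would make this precise: show that one step of the walk, applied to a vector with all level sets of size $\le \delta n$, satisfies $\|Pq\|_2^2 \lesssim (\alpha/d)^2 \cdot (\text{something}) + (\text{uniform-level error})$, using \eqref{e:bipartite-density} with the crude bound $\sqrt{|V_i||V_j|} \le \tfrac12(|V_i| + |V_j|)$ and the Cauchy–Schwarz / dyadic bookkeeping to control the number of level-set pairs (this is where the $(\log d)^2$ loss in the hypothesis $\alpha \lesssim d/(\log d)^2$ comes from: there are $O(\log n) = O(\log d / \xi)$ relevant dyadic levels, and one pays a factor like (number of levels)$^2$).

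Concretely, I expect the one-step inequality to take the shape $\|Pq\|_2 \lesssim \frac{\alpha \log d}{d}\,\|q\|_2 + \sqrt{1/n}\cdot(\text{err})$ — morally, $P$ contracts the part of $q$ orthogonal to uniform by a factor $\tilde O(\alpha/d) = \tilde O(d^{-c})$ per step, the same role $\lambda$ plays in the spectral argument, but obtained purely combinatorially. Iterating this $t$ times starting from $q_0 = p$ with $\|p\|_2 \le 1$ gives, after $t \ge \frac{4}{c\xi} + 1$ steps, a bound $\|P^t p - \pi\|_2 \lesssim \tilde O(d^{-ct}) + (\text{err}) \le \tilde O(n^{-\xi c t}) + (\text{err})$, and with $t \gtrsim 1/(c\xi)$ chosen so that $d^{-ct} = n^{-\xi c t}$ is at most $n^{-\Omega(1)}$ — the exponent $c\xi/8$ in the statement is just the bookkeeping-adjusted version of $\xi c t$ after absorbing the polylog losses. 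Then I convert back to total variation: $d_{\TV}(P^t p, \pi) = \tfrac12\|P^t p - \pi\|_1 \le \tfrac12\sqrt{n}\,\|P^t p - \pi\|_2$. The factor $\sqrt{n}$ is dangerous, so the $\ell_2$-bound must actually reach level $\ll 1/\sqrt n$ before we can afford it; here is where the $\sqrt{1-\delta}$ term enters. When $\delta$ is bounded away from $1$ this is fine, but as $\delta \to 1$ the level sets of size $\le \delta n$ can still carry almost all the mass at a level just above $1/n$, and the residual $\ell_2$-mass on those near-uniform-but-not-quite levels is what produces the $\tfrac14\sqrt{1-\delta}$ term rather than something negligible. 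So the refined statement is: after $t$ steps, $q_t$ is $\varepsilon$-close in $\ell_2$ (hence TV) to a distribution \emph{supported on a $\delta$-fraction or less of the mass being non-uniform}, and the TV distance from a distribution whose non-uniform mass is $\le 1-\delta$ is at most $\tfrac14\sqrt{1-\delta}$ plus the $\ell_2$-error — this last comparison is a short self-contained estimate bounding $\|q - \pi\|_1$ when $q$ agrees with $\pi$ on a set of measure $\delta$.

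The main obstacle, and the step I would spend the most care on, is the \textbf{dyadic level-set analysis of a single step} — specifically, bounding $\sum_v (Aq)(v)^2$ via $|E(V_i,V_j)|$ in a way that (i) only ever invokes \eqref{e:bipartite-density} on sets of size $\le \delta n$, correctly quarantining the $O(1/\delta)$ bottom levels into the error term, (ii) loses only a polylogarithmic (not polynomial) factor in the number of levels, so that the per-step contraction genuinely beats $d^{-c}\cdot\text{polylog}$, and (iii) handles the "diagonal" term $d|V_i||V_j|/n$ in \eqref{e:bipartite-density} cleanly — this term, summed appropriately, should reconstruct exactly the uniform component and not spoil the contraction of the orthogonal part. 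A secondary subtlety is that $q = P^s p$ need not have nicely bounded entries to begin with for $s = 0$ (e.g. $p = \chi_v$ has one entry equal to $1$, so $V_0 = \{v\}$ is a singleton, which is fine — size $1 \le \delta n$), but one should check the argument degrades gracefully and that the very first step is not special. Once the one-step lemma is in hand, the iteration, the choice of $t$, and the $\ell_2 \to \ell_1$ conversion are routine.
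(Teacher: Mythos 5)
Your overall strategy --- track $\|P^t p\|_2^2$, prove a one-step inequality of the shape $\|Pq\|_2^2 \lesssim C_\delta/n + (\text{contraction})\cdot\|q\|_2^2$, iterate for $O(1/(c\xi))$ steps, and convert to total variation by Cauchy--Schwarz --- is exactly the paper's strategy (Proposition 3.2, Lemma 3.3, and then the proof of Theorem 3.1 via inequality (3.8)). But your proposed one-step argument uses a dyadic level-set decomposition of the \emph{input} $q$, whereas the paper sorts the entries of both $p$ and $Pp$ and works with \emph{prefix sets} $T = \{\sigma(1),\dots,\sigma(k)\}$ of the \emph{output} $Pp$, together with the rearrangement inequality on the degree sequence $d_T(\cdot)$. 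This is a genuinely different decomposition, and it is the one that avoids the discrete-bucket losses you anticipate: instead of paying a factor of $(\text{number of dyadic levels})^2$, the paper applies the bipartite density bound to all prefixes of the sorted degree sequence $d_T'(1)\geq\cdots\geq d_T'(n)$ (equation (3.3)) and gets the harmonic-type sums $\sum_{k\le \delta n}1/\sqrt k$ and $\sum_k 1/k$ directly.

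There is, however, a concrete gap in your sketch that you would hit immediately if you tried to fill in the details. You write that ``$(Aq)(v)$ counts weighted edges from $v$ into the various level sets; bounding $\sum_v (Aq)(v)^2$ reduces to bounding $|E(V_i,V_j)|$ between pairs of level sets.'' This reduction is not correct as stated: $\sum_v (Aq)(v)^2 = q^\top A^2 q$, and if $q \approx \sum_j 2^{-j}\one_{V_j}$ then the cross terms are $\sum_v d_{V_i}(v)\,d_{V_j}(v) = \one_{V_i}^\top A^2 \one_{V_j}$, which counts \emph{length-two walks} between $V_i$ and $V_j$, not edges. The bipartite density condition \eqref{e:bipartite-density} bounds $\one_S^\top A\,\one_T$, not $\one_S^\top A^2\,\one_T$, so your invocation of it does not apply. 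To make the level-set route work you would also have to decompose the \emph{output} $Pq$ into its own level sets $W_k$ and bound $|E(V_j,W_k)|$ between input and output buckets --- which is exactly the role played in the paper by the prefix set $T$ of $Pp$ and the set $S$ of the largest $r$ entries of $p$ in equation (3.3). Relatedly, your account of where the $\tfrac14\sqrt{1-\delta}$ term comes from (``residual $\ell_2$-mass on near-uniform levels'') is not quite the paper's mechanism: in the paper it arises from the $(1-\delta)n$ \emph{smallest} sorted entries of $Pp$, for which the bipartite density condition cannot be applied at all (since prefix sets of size $> \delta n$ are not covered), so one falls back to the crude bound $Pp(\sigma(k))\le Pp(\sigma(\delta n))$ (inequality (3.6)), and the accumulated slack produces $C_\delta = \delta + \tfrac54(1-\delta)$ and hence $\tfrac14\sqrt{1-\delta}$ after Cauchy--Schwarz. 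The lesson is that the obstruction is a hard cutoff at rank $\delta n$ in a sorted order, not a soft ``near-uniform level'' issue; the sorted-prefix framing makes this cutoff explicit and clean in a way that dyadic buckets of $q$ do not.
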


Assuming \autoref{t:VD}, the proofs of \autoref{t:upper-bound} and \autoref{t:weak-mixing-time} follow easily.

\begin{proof}[Proof of \autoref{t:weak-mixing-time}]
Since $c=1-\log_d\alpha$ and $n=d^{1/\xi}$, it follows that 
\begin{equation}\label{e:n^{c xi/8}}
c\xi = \xi - \xi \cdot \frac{\log\alpha}{\log d} = \xi-\frac{\log\alpha}{\log n}
\quad \implies \quad
\frac{1}{n^{c\xi}}=\frac{1}{n^{(\xi-\log_n\alpha)}}=\frac{\alpha}{d}.
\end{equation}
By \eqref{e:n^{c xi/8}} and \autoref{t:VD}, for $t\geq 4/c\xi +1$,
\[
d_{\TV}(t)
\leq \frac14\cdot \sqrt{1-\delta}+O\Big(\Big(\frac{\alpha}{d}\Big)^{\frac18}\Big)
\leq \frac14+O\Big((\log d)^{-\frac14}\Big)\leq \frac13,
\]
where the second inequality follows by the assumption that $\alpha\lesssim d/(\log d)^2$, and the last inequality holds for sufficiently large $d$. 
This implies that 
$$\tau_{1/3}(P) \lesssim \frac{1}{c\xi}=\frac{\log d}{\xi\log(d/\alpha)}=\frac{\log n}{\log (d/\alpha)}$$
\end{proof}

To prove \autoref{t:upper-bound}, we apply the standard fact in \eqref{e:standard}.

\begin{proof}[Proof of \autoref{t:upper-bound}]
Here we assume $\alpha$-bipartite density so $\delta=1$. 
For $t\geq 4/c\xi+1$, as in \eqref{e:n^{c xi/8}}, 
\[
d_{\TV}(t)
\leq \frac14 \cdot \sqrt{1-\delta} + O\Big(\frac{1}{n^{c\xi/8}}\Big)
=O\Big( \frac{\alpha}{d} \Big)^{\frac18}:=\beta,
\]
By \eqref{e:standard}, 
for $k\geq (\log \frac1n)/\log (2\beta)$, 
\[
d_{\TV}(kt) 
\leq (2 \cdot d_{\TV}(t))^k 
\leq (2\beta)^k \leq \frac1n.
\]
Since $c=1-\log_d\alpha = \log(d/\alpha)/\log d$, we conclude that 
\[
\tau_{1/n}(P)
\lesssim \frac{\log n}{\log (d/\alpha)}\cdot \Big(\dfrac{4}{c\xi}+1\Big)
\lesssim \frac{\log n}{c\xi\log(d/\alpha)}
= \frac{\log d}{\log(d/\alpha)}\cdot \frac{\log n}{\xi\log(d/\alpha)}
= \Big( \frac{\log n}{\log(d/\alpha)}\Big)^2.
\]
\end{proof}

\subsection{Bounding Mixing Time via 2-Norm}

The main idea is to measure the mixing progress by the $\ell_2$-norm of the random walk distribution. 
For any distribution $p$, $\|p\|^2_2$ is lower bounded by that of the stationary distribution, which is $1/n$. 
The following proposition measures the progress of $\|p\|^2_2$ approaching $C_\delta/n$, where $C_\delta$ is a constant depending on how close $\delta$ is to 1.

\begin{proposition}[$2$-Norm Decrease] \label{p:2-Norm Bound}
Let $G$ be a $d$-regular graph with $d = n^\xi$ for some constant $\xi > 0$.
If $G$ satisfies the $\delta$-small set $\alpha$-bipartite density condition with $\alpha \log d / (\xi d) \leq 1$ and $\delta \geq \frac{\sqrt 5-2}{2}\cdot \alpha/d $, then  
\[
\|Pp\|_2^2\leq \frac{C_\delta}{n}+O\bigg(\sqrt{\frac{\alpha\log d}{\xi d}}\bigg)\cdot \|p\|_2^2
\]
for any probability distribution $p$, where $C_\delta=\delta+5(1-\delta)/4$.
\end{proposition}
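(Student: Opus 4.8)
The plan is to bound $\|Pp\|_2^2 = \sum_{v} (Pp)(v)^2$ directly by a level-set (or ``layer-cake'') decomposition of the distribution $p$, using the $\delta$-small-set $\alpha$-bipartite density condition to control the amount of probability mass that can be pushed into any region. Write $q := Pp$, so $q(v) = \frac1d \sum_{u \sim v} p(u)$. The key observation is that $\|q\|_2^2 = \frac{1}{d^2}\sum_{v}\big(\sum_{u\sim v}p(u)\big)^2$, and we want to relate the quadratic form $\langle p, P^2 p\rangle = \frac{1}{d^2}\langle A p, Ap\rangle$ to edge counts $|E(S,T)|$ between level sets of $p$. Concretely, I would slice $[0,\infty)$ dyadically: let $S_i := \{u : p(u) \in (2^{-i-1}\theta, 2^{-i}\theta]\}$ for an appropriate scale $\theta$ (e.g. $\theta = \max_u p(u)$ or a smoothed proxy), plus a ``tail'' bucket for tiny values, and similarly for the coordinates of $q$. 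Then $\langle Ap, Ap \rangle$ expands as a double sum over pairs of level sets, with cross-terms governed by $|E(S_i, S_j)|$ — and here is where \eqref{e:bipartite-density} enters. The main point of the bipartite density hypothesis is that the ``excess'' term $\alpha\sqrt{|S_i||S_j|}$ is small relative to $d$, so the off-diagonal contribution is dominated either by the ``ideal'' term $d|S_i||S_j|/n$ (which reassembles into $(\sum_u p(u))^2/n \cdot d^2 = d^2/n$, the stationary $\ell_2$-mass) or by a $\sqrt{\alpha/d}$-type factor times $\|p\|_2^2$.

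**Key steps in order.**

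First, I would reduce to controlling edge counts between level sets: expand $\|Pp\|_2^2 = \frac{1}{d^2}\sum_{uv \in E,\, \text{ordered}} p(u)p(v) + \frac1d\|p\|_2^2$ (the diagonal contribution $\sum_v (Pp)(v)^2$ splits into same-vertex and distinct-neighbor-pair terms — actually more carefully, $d^2\|Pp\|_2^2 = \sum_v \sum_{u,u' \sim v} p(u)p(u')$, a sum over length-2 walks). Second, bucket the mass: for each dyadic scale, define $S_i$ as above; since $\sum_i 2^{-i}\theta |S_i| \approx \sum_u p(u) = 1$ and $\|p\|_2^2 \approx \sum_i (2^{-i}\theta)^2 |S_i|$, the buckets carry all the relevant information up to constant/log factors (the $\log d$ in the statement is the number of relevant dyadic levels, roughly $\log n \asymp \frac1\xi \log d$ levels, truncated to where mass matters — this explains the $\log d/\xi$ factor). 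Third, apply \eqref{e:bipartite-density} to each pair $(S_i, S_j)$ — crucially checking that $|S_i|, |S_j| \le \delta n$, which should follow because a level set at height $h$ has size at most $1/h$, and the small-mass regime is exactly where $\delta \gtrsim \alpha/d$ kicks in; the hypothesis $\delta \geq \frac{\sqrt5-2}{2}\cdot\alpha/d$ is presumably the precise threshold making the truncation error land inside the $\frac{5(1-\delta)}{4}$ slack. Fourth, reassemble: the $d|S_i||S_j|/n$ terms sum (via Cauchy–Schwarz / the identity $\sum_i 2^{-i}\theta|S_i| \le 1$) to at most $\frac{C_\delta}{n}$ where $C_\delta$ absorbs the over-counting from dyadic rounding and the large-level-set contributions that fall outside the small-set regime, and the $\alpha\sqrt{|S_i||S_j|}$ terms sum to $O(\sqrt{\alpha \log d/(\xi d)})\cdot\|p\|_2^2$ after summing the geometric-type series over the $O(\log d/\xi)$ levels and applying Cauchy–Schwarz to $\sum_{i,j}\sqrt{|S_i||S_j|} \cdot (\text{heights})$.

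**The main obstacle.**

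The hard part will be the bookkeeping in step four: controlling the accumulation of error over the roughly $\log n$ dyadic levels without losing more than a single $\sqrt{\log d/\xi}$ factor, and simultaneously arguing that the ``ideal'' terms reassemble to exactly $C_\delta/n$ with the stated constant $C_\delta = \delta + \frac{5(1-\delta)}{4}$ rather than something larger. In particular one must carefully handle levels whose sets exceed $\delta n$ (where the density condition is unavailable): these correspond to mass concentrated on large sets, where one instead uses the trivial bound $|E(S,T)| \le d\min(|S|,|T|)$, and the contribution of such levels is what forces the $\frac{5(1-\delta)}{4}$ overhead — quantifying this exactly, and verifying the threshold $\delta \ge \frac{\sqrt5-2}{2}\cdot\alpha/d$ is what makes everything consistent, is the delicate computation. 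A secondary subtlety is the truncation of tiny-probability buckets: mass below scale $\approx 1/n$ contributes negligibly to $\|p\|_2^2$ relative to the stationary floor $1/n$, but one must confirm the aggregate of these sub-threshold pieces does not spoil the constant $C_\delta$, which should be routine given $\delta \ge$ const $\cdot \alpha/d$ and $\alpha \le d$.
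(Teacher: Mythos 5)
Your plan is a genuinely different route from the paper, and as written it has a concrete gap in the central step.

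\textbf{Where the plan breaks.} You propose to expand $\|Pp\|_2^2 = \frac{1}{d^2}\langle Ap,Ap\rangle$ as ``a double sum over pairs of level sets, with cross-terms governed by $|E(S_i,S_j)|$.'' But $\langle Ap,Ap\rangle = \sum_{u,u'} p(u)p(u')\,(A^2)_{uu'}$, so the weight on a pair $(u,u')$ is the number of \emph{common neighbors} of $u$ and $u'$ in $G$, not the number of edges between them. The cross-terms are therefore governed by $|E_{G^2}(S_i,S_j)|$, i.e. edge counts in the \emph{squared} graph, and the $\alpha$-bipartite density hypothesis on $G$ gives no control over those. The alternative bilinear reading $\|Pp\|_2^2 = \frac1d\langle Ap, q\rangle$ with $q = Pp$ does involve genuine edges of $G$, but then the decomposition requires level sets of $q$, which is exactly the quantity you are trying to bound, so the argument becomes circular unless you break the circularity by some other device. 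As written, neither reading closes.

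\textbf{How the paper avoids this.} The paper never forms the quadratic form. Instead it sorts $p(1)\geq\cdots\geq p(n)$ and $Pp(\sigma(1))\geq\cdots\geq Pp(\sigma(n))$, fixes $T$ = top-$k$ of $Pp$, and writes $\sum_{j\leq k}Pp(\sigma(j)) = \frac1d\sum_i p(i)\,d_T(i)$. Applying the rearrangement inequality replaces $d_T(i)$ by its sorted version $d'_T(i)$, and the bipartite density condition plus an averaging argument then gives a pointwise bound $d'_T(r)\leq d|T|/n + \alpha\sqrt{|T|/r}$. Summing and averaging once more yields a \emph{pointwise} bound on each $Pp(\sigma(k))$ (with a separate case for $k>\delta n$ using sortedness), which is what you actually need in order to square and sum. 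This sidesteps both the $G^2$ issue and the circularity: the bipartite density condition is only ever applied between the fixed prefix $T$ of $Pp$ and prefixes of $p$, never between two unknown level sets of $Pp$. It also avoids the dyadic-bucketing overhead; the harmonic and $\sum 1/\sqrt{k}$ sums over sorted prefixes produce exactly one $\sqrt{\log n} \asymp \sqrt{\log d/\xi}$ factor, whereas a na\"ive pairing over $O(\log n)$ dyadic levels on each side risks a $\log^2$ loss. Finally, the constant $C_\delta = \delta + \tfrac{5}{4}(1-\delta)$ and the threshold $\delta \geq \tfrac{\sqrt 5 - 2}{2}\cdot\alpha/d$ (equivalently $\gamma := \bar\alpha/\delta \leq \tfrac{\sqrt 5}{2}-1$, so that $(1+\gamma)^2 \leq \tfrac 54$) fall out cleanly from the $k>\delta n$ case in that argument, not from a level-set truncation. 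If you want to pursue a level-set decomposition, you would need to (i) switch to the bilinear form $\langle Ap, q\rangle$ and supply an independent a priori bound on the level structure of $q$, or (ii) replace ``common neighbors'' by a hypothesis about $G^2$; the paper's sorted-prefix argument is the cleaner path.
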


\begin{proof}
Let $p$ be an arbitrary probability distribution and $Pp$ be the distribution after one step of random walks. 
Assume $p(1)\geq\cdots\geq p(n)$ without loss of generality, and let $\sigma$ be a permutation of $[n]$ such that $Pp(\sigma(1))\geq Pp(\sigma(2))\geq \cdots \geq Pp(\sigma(n))$. Let $T=[\sigma(1),\cdots, \sigma(k)]$ be the largest $k$ entries in $Pp$ for some $k\in [n]$, and let $d_T(i)=|\{j\in T\mid ij\in E\}|$ denotes the number of vertices in $T$ adjacent to some vertex $i\in V$.
Note that
\begin{equation}\label{e:Pp1tok}
\sum_{1\leq j\leq k}Pp(\sigma(j))
=\sum_{1\leq j\leq k} ~\sum_{i:(i,\sigma(j))\in E}\frac{p(i)}{d}
=\sum^{n}_{i=1}\frac{p(i)}{d}\cdot d_T(i)
\leq \sum^n_{i=1}\frac{p(i)}{d}\cdot d'_T(i),
\end{equation}
where $d_T'(1)\geq \cdots\geq d_T'(n)$ is a permutation of $d_T(1),\cdots, d_T(n)$ sorted from largest to smallest, with the last inequality follows from rearrangement inequality. 

Consider first the case that $|T|=k\leq \delta n$. For any $r\in \{1,\cdots, \delta n\}$, it follows from the $\delta$-small-set $\alpha$-bipartite density condition and an averaging argument that
\begin{equation}\label{e:d'_T(i)}
\sum_{i\leq r}d_T'(i)
\leq \frac{dr|T|}{n}+\alpha\sqrt{r |T|}
\quad \implies \quad 
d_T'(r)\leq \frac{d|T|}{n}+\alpha\sqrt{\frac{|T|}{r}}.
\end{equation}
And, for any $r\geq \delta n$, it follows from the sortedness of $d_T'(i)$ that
\begin{equation}\label{e:d'_T(delta n)}
d_T'(r)\leq d_T'(\delta n)\leq \frac{d|T|}{n}+\alpha\sqrt{\frac{|T|}{\delta n}}.
\end{equation}
Denote $\bar\alpha := \alpha/d$. 
Combining \eqref{e:d'_T(i)}, \eqref{e:d'_T(delta n)} with \eqref{e:Pp1tok}, it follows that
\begin{align*}
\sum_{1\leq j\leq k}Pp(\sigma(j))
\leq \sum^{n}_{i=1}\frac{p(i)}{d}\cdot d_T'(i)
&\leq \sum^{\delta n}_{i=1}\frac{p(i)}{d}\cdot \bigg(\frac{d|T|}{n}+\alpha\sqrt{\frac{|T|}{i}}\bigg)+\sum_{i=\delta n+1}^n\frac{p(i)}{d}\cdot \bigg(\frac{d|T|}{n}+\alpha\sqrt{\frac{|T|}{\delta n}}\bigg)
\\
&= \frac{|T|}{n}\cdot \sum^{n}_{i=1}p(i) 
  + \bar\alpha\sqrt{|T|}\cdot \sum_{i=1}^{\delta n}\frac{p(i)}{\sqrt i}
  + \frac{\bar\alpha\sqrt{|T|}}{\sqrt{\delta n}} \cdot \sum^n_{i=\delta n+1}p(i)
\\
&\leq \frac{|T|}{n}+\frac{\bar\alpha\sqrt{|T|}}{\sqrt{\delta n}}
  + \bar\alpha\sqrt{|T|}\cdot \sum_{i=1}^{n}\frac{p(i)}{\sqrt i},
\end{align*}
where the last inequality follows from $\sum^{n}_{i=1}p(i)=1$. 
Recall that $Pp(\sigma(1))\geq \cdots \geq Pp(\sigma(n))$ and $|T|=k$.
Hence, by an averaging argument,
\begin{equation}\label{e:Pp(sigma(k))}
Pp(\sigma(k))
\leq \Big(\frac{1}{n}+\frac{1}{\sqrt{k}}\cdot \frac{\bar\alpha}{\sqrt{\delta n}}\Big)
  +  \frac{\bar\alpha}{\sqrt k}\cdot \sum_{i=1}^{n}\frac{p(i)}{\sqrt i}.
\end{equation}
For the case that $k>\delta n$, we simply use the sortedness of $Pp$ to obtain that
\begin{equation}\label{e:Pp(sigma(delta n))}
Pp(\sigma(k))
\leq Pp(\sigma(\delta n))
\leq \Big(\frac{1}{n}+\frac{\bar\alpha}{\delta n}\Big)
  + \frac{\bar\alpha}{\sqrt {\delta n}}\cdot\sum_{i=1}^{n}\frac{p(i)}{\sqrt i}.
\end{equation}
Let $L_1(k):=\frac{1}{n}+\frac{1}{\sqrt{k}}\cdot \frac{\bar\alpha}{\sqrt{\delta n}}$ 
and $H_1(k):=\frac{\bar\alpha}{\sqrt k}\cdot \sum_{i=1}^{n}\frac{p(i)}{\sqrt i}$ denote the two terms in \eqref{e:Pp(sigma(k))}.
Similarly, let $L_2:=\frac{1}{n}+\frac{\bar\alpha}{\delta n}$ 
and $H_2:=\frac{\bar\alpha}{\sqrt {\delta n}}\cdot \sum_{i=1}^{n}\frac{p(i)}{\sqrt i}$ denote the two terms in \eqref{e:Pp(sigma(delta n))}. 
It follows that
\begin{align*}
\|Pp\|_2^2&=\sum^{\delta n}_{k=1}Pp(\sigma(k))^2+\sum^{n}_{k=\delta n+1}Pp(\sigma(k))^2
\\
&\leq \sum^{\delta n}_{k=1}\Big(L_1(k)^2+2L_1(k)H_1(k)+H_1(k)^2\Big)
  + \sum_{k=\delta n+1}^n\Big(L_2^2+2L_2H_2+H_2^2\Big)
\\
&\leq \sum^{\delta n}_{k=1}L_1(k)^2+\sum^n_{k=\delta n+1}L_2^2+\sum^n_{k=1}\Big(2L_2H_2+H_2^2\Big)
\\
&=\sum^{\delta n}_{k=1}L_1(k)^2 +(1-\delta)n\cdot L_2^2+n\cdot \big(2L_2H_2+H_2^2\big),
\end{align*}
where the second inequality follows since $L_1(k)\leq L_2$ and $H_1(k)\leq H_2$ for all $k\leq \delta n$. 

Let $\gamma:=\bar\alpha/\delta$. 
It remains to analyze the three terms. 
For the first term,
\begin{equation}
\begin{split}
\sum^{\delta n}_{k=1}L_1(k)^2
=\sum^{\delta n}_{k=1}\Big(\frac{1}{n}+\frac{1}{\sqrt{k}}\cdot \frac{\bar\alpha}{\sqrt{\delta n}}\Big)^2
&\leq \sum^{\delta n}_{k=1}\frac{1}{n^2}+\frac{2\bar\alpha}{n\sqrt{\delta n}}\sum^{\delta n}_{k=1}\frac{1}{\sqrt k}+\frac{\bar\alpha^2}{\delta n}\sum^n_{k=1}\frac1k
\\
&\leq \frac{1}{n}\cdot \Big(\delta+O\big(\bar\alpha+\bar\alpha\gamma\log n\big)\Big)\\
&\leq \frac1n\cdot \Big(\delta+O\Big(\frac{\bar\alpha\log d}{\xi} \Big)\Big),
\end{split}
\end{equation}
where the second inequality uses $\sum_{k=1}^{\delta n} 1/\sqrt{k} = O(\sqrt{\delta n})$ and 
$\sum_{k=1}^n 1/k = O(\log n)$,
and the last inequality follows by our assumptions that $\gamma\leq O(1)$ and $n=d^{1/\xi}$. 

For the second term, using the assumption that $\gamma\leq \frac{\sqrt 5}{2}-1$,
\begin{equation}
(1-\delta)n\cdot L_2^2
=(1-\delta)n\cdot \Big(\frac1n+\frac{\bar\alpha}{\delta n}\Big)^2
=(1-\delta)\cdot \frac{(1+\gamma)^2}{n}
\leq (1-\delta)\cdot \frac{5}{4n}.
\end{equation}
For the last term, applying the Cauchy-Schwarz inequality, 
\begin{align*}
2L_2H_2+H_2^2
&=\frac{2(1+\gamma)}{n}\cdot \frac{\bar\alpha}{\sqrt{\delta n}}\sum^{n}_{i=1}\frac{p(i)}{\sqrt i}+\frac{\bar\alpha^2}{\delta n}\Big(\sum^{n}_{i=1}\frac{p(i)}{\sqrt i}\Big)^2
\\
&\leq \frac{2\bar\alpha(1+\gamma)}{n\sqrt{\delta n}}\cdot \sqrt{\sum^n_{i=1}\frac1i}\cdot  \sqrt{\sum^n_{i=1}p(i)^2}+\frac{\bar\alpha^2}{\delta n}\cdot \Big(\sum^n_{i=1}\frac1i\Big)\cdot \Big(\sum^n_{i=1}p(i)^2\Big)
\\
&\lesssim\frac{2\bar\alpha(1+\gamma)}{n\sqrt{\delta n}}\cdot \sqrt{\log n}\cdot  \|p\|_2+\frac{\bar\alpha^2}{\delta n}\cdot \log n\cdot \|p\|_2^2.
\end{align*}
Since $n=d^{1/\xi}$ and $\|p\|_2 \geq 1/\sqrt{n}$, it follows that
\begin{align*}
n\cdot \big(2L_2H_2+H_2^2\big)&\lesssim \frac{2\bar\alpha(1+\gamma)}{\sqrt{\delta}}\cdot \sqrt{\frac{\log d}{\xi} }\cdot  \|p\|^2_2+\frac{\bar\alpha^2}{\delta}\cdot \frac{\log d}{\xi} \cdot \|p\|_2^2
\\
&\lesssim \sqrt \gamma(1+\gamma)\cdot \sqrt{\frac{\bar\alpha\log d}{\xi}}+\gamma\cdot \frac{\bar\alpha\log d}{\xi} \cdot \|p\|^2_2&\tag{by $\gamma=\frac{\bar\alpha}{\delta}$}
\\
&\lesssim \sqrt{\frac{\bar\alpha\log d}{\xi}} \cdot \|p\|_2^2,
\end{align*}
where the last inequality uses the assumptions that $\gamma\leq O(1)$ 
and $(\bar\alpha\log d)/\xi\leq 1$. 
 
Combining the three terms and using $1/n\leq \|p\|^2_2$ and $(\bar\alpha\log d)/\xi\leq 1$,
we conclude that
\begin{align*}
\|Pp\|_2^2
&\leq \frac1n\cdot \Big(\delta+O\Big(\frac{\bar\alpha\log d}{\xi}\Big)\Big) + (1-\delta)\cdot \frac{5}{4n}+O\bigg(\sqrt{\frac{\bar\alpha\log d}{\xi}}\bigg)\cdot \|p\|_2^2
\\
&=\frac1n\cdot \Big(\delta+(1-\delta)\cdot \frac54\Big) + O\Big(\frac{\bar\alpha\log d}{\xi}\Big)\cdot \frac1n+O\bigg(\sqrt{\frac{\bar\alpha\log d}{\xi}}\bigg)\cdot \|p\|_2^2
\\
&\leq \frac1n\cdot \Big(\delta+(1-\delta)\cdot \frac54\Big)+O\bigg(\sqrt{\frac{\bar\alpha\log d}{\xi}}\bigg)\cdot \|p\|_2^2.
\end{align*}
\end{proof}

\subsection{Proof of \autoref{t:VD}}

\autoref{p:2-Norm Bound} allows us to derive the number of steps needed for the squared 2-norm of any initial distribution to drop to around $C_\delta/n$.

\begin{lemma}\label{c:steps to C_delta/n}
Let $G$ be a $d$-regular graph with $d=n^\xi$ for some constant $\xi>0$.
If $G$ satisfies the $\delta$-small-set $\alpha$-bipartite density condition for some $\alpha = d^{1-c} \leq \xi^2 d / (\log d)^2$ and $\delta\gtrsim \alpha /d$, then 
\[
\|P^tp_0\|_2^2\leq \frac{C_\delta}{n}+O\Big(\frac{1}{n^{1+c\xi/4}}\Big)
\quad \textrm{for~any~initial~distribution~} p_0 \textrm{~and~any~} 
t\geq \frac{4}{c\xi}+1.
\]
\end{lemma}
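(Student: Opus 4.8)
The plan is to iterate \autoref{p:2-Norm Bound} starting from an arbitrary initial distribution $p_0$ and track how fast $\|P^t p_0\|_2^2$ contracts toward $C_\delta / n$. Write $\mu := O(\sqrt{\bar\alpha \log d / \xi})$ for the contraction factor in \autoref{p:2-Norm Bound}, and note that the hypotheses $\alpha = d^{1-c} \leq \xi^2 d/(\log d)^2$ and $\delta \gtrsim \alpha/d$ are exactly what is needed to invoke that proposition (in particular $\bar\alpha \log d/\xi \leq 1$ and $\gamma = \bar\alpha/\delta = O(1)$, and I should check the constant in $\delta \gtrsim \alpha/d$ can be taken large enough to beat $(\sqrt5-2)/2$). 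Applying the proposition $t$ times gives the geometric-series bound
\[
\|P^t p_0\|_2^2 \leq \mu^t \|p_0\|_2^2 + \frac{C_\delta}{n}\sum_{j\geq 0}\mu^j \leq \mu^t \|p_0\|_2^2 + \frac{C_\delta}{n}\cdot\frac{1}{1-\mu},
\]
using $\mu < 1$ for $d$ large. Since $p_0$ is a probability distribution, $\|p_0\|_2^2 \leq 1$, so the whole burden is to show that after $t \geq 4/(c\xi) + 1$ steps the term $\mu^t$ is as small as $O(n^{-1-c\xi/4})$, and that the $\frac{1}{1-\mu}$ factor only perturbs $C_\delta/n$ by a lower-order amount absorbable into the $O(n^{-1-c\xi/4})$ error (here I should be slightly careful: $\frac{C_\delta}{n(1-\mu)} = \frac{C_\delta}{n}(1 + O(\mu))$, and $C_\delta \mu / n = O(\mu/n)$; since $\mu = O(\sqrt{\bar\alpha \log d/\xi})$ and $\bar\alpha = d^{-c}$, this is $O(n^{-1}d^{-c/2}\sqrt{\log d}) = O(n^{-1 - c\xi/2}\sqrt{\log d})$, which is comfortably $O(n^{-1-c\xi/4})$ for large $d$).

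The key numeric step is bounding $\mu^t$. We have $\mu = O(\sqrt{\bar\alpha \log d/\xi}) = O(d^{-c/2}\sqrt{\log d /\xi})$, and since $n = d^{1/\xi}$ we can write $d^{-c/2} = n^{-c\xi/2}$, so $\mu = O(n^{-c\xi/2}\sqrt{\log d/\xi})$. For large enough $d$ the $\sqrt{\log d}$ factor is dominated, giving $\mu \leq n^{-c\xi/4}$ say (any fixed fraction of the exponent works once $d$ is large, since $\log d = \xi \log n$ grows polylogarithmically while $n^{c\xi/4}$ grows polynomially). Then with $t \geq 4/(c\xi) + 1$ we get
\[
\mu^t \leq \mu^{\,4/(c\xi)} \cdot \mu \leq \big(n^{-c\xi/4}\big)^{4/(c\xi)} \cdot n^{-c\xi/4} = n^{-1} \cdot n^{-c\xi/4} = n^{-1-c\xi/4},
\]
which combined with $\|p_0\|_2^2 \leq 1$ is exactly the claimed error term. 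I'd present this cleanly by first fixing the threshold on $d$ so that $\mu \leq n^{-c\xi/4}$, then doing the one-line exponent arithmetic.

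The main obstacle I anticipate is purely bookkeeping rather than conceptual: making sure the $O(\cdot)$ constants line up so that $\mu < 1$ and $\mu \leq n^{-c\xi/4}$ simultaneously hold under the stated hypothesis $\alpha \leq \xi^2 d/(\log d)^2$ (this is presumably why the $(\log d)^2$ rather than just $\log d$ appears — it guarantees $\bar\alpha \log d/\xi \leq (\log d)^{-1}$, so $\mu = O((\log d)^{-1/2}) = o(1)$), and verifying that replacing the hypothesis $\alpha \log d/(\xi d) \leq 1$ of \autoref{p:2-Norm Bound} by the stronger $\alpha \leq \xi^2 d/(\log d)^2$ here is legitimate (it is strictly stronger, so the proposition applies). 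A secondary subtlety is that \autoref{p:2-Norm Bound} is stated for "any probability distribution $p$", and we apply it to $p = P^j p_0$, which is indeed a probability distribution since $P$ is stochastic — worth a one-line remark. Everything else is the geometric sum and exponent comparison above.
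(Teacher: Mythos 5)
Your proof is correct and uses the same key ingredients as the paper's — iterating \autoref{p:2-Norm Bound} and converting the contraction rate $\mu = O\big(\sqrt{\bar\alpha\log d/\xi}\big)$ into $O(d^{-c/4}) = O(n^{-c\xi/4})$ via the hypothesis $\alpha \leq \xi^2 d/(\log d)^2$ (equivalently $d^{c/2}\geq (\log d)/\xi$) — but organizes the iteration differently. The paper argues in two phases: it first shows $\|P^t p_0\|_2^2 \lesssim C_\delta\beta/n$ within $4/(c\xi)-1$ steps by exploiting the multiplicative drop $\|Pp\|_2^2\leq (2/\beta)\|p\|_2^2$ when the norm is above threshold, and then applies the proposition twice more to land at $C_\delta/n + O(n^{-1-c\xi/4})$. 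You instead unroll the affine recurrence $\|Pp\|_2^2 \leq C_\delta/n + \mu\|p\|_2^2$ into the closed form $\|P^t p_0\|_2^2 \leq \frac{C_\delta}{n}\cdot\frac{1}{1-\mu} + \mu^t\|p_0\|_2^2$, which bundles both phases: the $\mu^t$ term is the transient and $\frac{1}{1-\mu} = 1+O(\mu)$ supplies the $O(\mu/n)=O(n^{-1-c\xi/4})$ correction. This is a cleaner presentation that avoids the paper's case split. One small point of hygiene: you cannot literally write $\mu\leq n^{-c\xi/4}$, since the bound is only $\mu=O(n^{-c\xi/4})$ with an implied constant; as you anticipate, this costs a factor of $C^{4/(c\xi)+1}$ in $\mu^t$, which is independent of $n$ (since $c,\xi$ are constants) and hence absorbed by the $O(\cdot)$ in the lemma statement. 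Your remarks that $P^j p_0$ remains a probability distribution and that $\alpha\leq \xi^2 d/(\log d)^2$ implies the proposition's hypothesis $\alpha\log d/(\xi d)\leq 1$ (for $d$ large) are correct and worth keeping.
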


\begin{proof}
Let $1/\beta := O\big(\sqrt{\alpha\log d/(\xi d)}\big)$ denote the drop rate in \autoref{p:2-Norm Bound}. 
For any probability distribution $p$, if $\|p\|_2^2\lesssim C_\delta\beta/n$, then by \autoref{p:2-Norm Bound},
\[
\|Pp\|_2^2
\leq \frac{C_\delta}{n}+\frac1\beta\cdot \|p\|_2^2
\leq \frac{C_\delta}{n}+\frac1\beta\cdot O\Big(\frac{C_\delta \beta}{n}\Big)
\leq O\Big(\frac{C_\delta}{n}\Big).
\]
This implies that after the next step of random walks,
\[
\|P^2p\|_2^2
\leq \frac{C_\delta}{n}+\frac1\beta\cdot \|Pp\|_2^2
\leq\frac{C_\delta}{n}+\frac1\beta\cdot O\Big(\frac{C_\delta}{n}\Big)
= \frac{C_\delta}{n}+O\bigg(\sqrt{\frac{\alpha\log d}{\xi d}}\cdot \frac{C_\delta}{n}\bigg)
= \frac{C_\delta}{n}+O\bigg(\sqrt{\frac{\log d}{ \xi d^{c}}}\cdot \frac{1}{n}\bigg).
\]
Note that our assumption $\alpha = d^{1-c} \leq \xi^2 d / (\log d)^2$ implies that $d^{\frac{c}{2}} \geq (\log d) / \xi$, and thus
\[
\|P^2p\|_2^2
\leq \frac{C_\delta}{n}+O\bigg(\sqrt{\frac{\log d}{ \xi d^{c}}}\cdot \frac{1}{n}\bigg)
\leq \frac{C_\delta}{n}+O\Big(\frac{1}{d^{c/4}}\cdot \frac{1}{n}\Big)
= \frac{C_\delta}{n}+O\Big(\frac{1}{n^{c\xi/4 + 1}}\Big),
\]
where the last equality follows by the assumption $d=n^\xi$. 
Hence, for distribution $p$ that is already close to stationary distribution  
(i.e., $\|p\|^2\lesssim C_\delta \beta/n$), 
its squared 2-norm drops to $C_\delta/n+O(1/n^{c\xi/4 + 1})$ in two steps. 

On the other hand, if $\|p\|^2\geq C_\delta\beta/n$, there is a large drop rate such that
\[
\|Pp\|_2^2\leq \frac{C_\delta}{n}+\frac1\beta\cdot \|p\|_2^2\leq \frac2\beta\cdot \|p\|_2^2.
\]
This implies that for any initial distribution $p_0$, $\|P^tp_0\|_2^2\lesssim C_\delta\beta/n$ for $t\geq 4/c\xi-1$ because
\[
\|P^tp_0\|_2^2
\leq \Big(\frac2\beta\Big)^t\cdot \|p_0\|_2^2
\leq \Big(\frac{2}{\beta}\Big)^{\frac{4}{c\xi}-1}
=\frac\beta2\cdot O\bigg( \sqrt{\frac{\log d}{\xi d^{c}}} \bigg)^{\frac{4}{c\xi}}
\lesssim \frac\beta2\cdot \Big(\frac{1}{d^{c/4}}\Big)^{\frac{4}{c\xi}}
\leq \frac{\beta}{n}
\leq \frac{C_\delta\beta}{n},
\]
where the third last inequality uses $d^{\frac{c}{2}} \geq (\log d) / \xi$ that we derived above. 

To summarize, $\|P^tp_0\|_2^2\lesssim C_\delta\beta/n$ after $4/c\xi-1$ steps of random walks, and the lemma follows after two more steps of random walks using the calculation in the first paragraph.
\end{proof}

We are ready to prove \autoref{t:VD}. 

\begin{proof}[Proof of \autoref{t:VD}]
Let $p$ be an arbitrary probability distribution. 
Note that an upper bound on $\|p\|_2^2$ implies an upper bound on $d_{TV}(p, \pi)$ by Cauchy-Schwarz:
\begin{equation}\label{dTV and 2norm}
d_{\TV}(p, \pi)
=\frac12\|p-\pi\|_1
\leq \frac{\sqrt n}{2}\|p-\pi\|_2
=\frac{\sqrt n}{2}\sqrt{\|p\|_2^2-2\langle p,\pi\rangle+\|\pi\|_2^2}
=\frac{\sqrt n}{2}\sqrt{\|p\|_2^2-\frac1n}.
\end{equation}
By \autoref{c:steps to C_delta/n}, for any initial distribution $p_0$ and any $t\geq 4/c\xi+1$,\[
\|P^tp_0\|^2_2\leq \frac{C_\delta}{n} + O\Big(\frac{1}{n^{1+c\xi/4}}\Big).
\] 
Combining this with \eqref{dTV and 2norm} and the fact that $\sqrt{a+b}\leq \sqrt a+\sqrt b$, we conclude that
\begin{align*}
d_{\TV}\big(P^tp_0,\pi\big)
\leq \frac{\sqrt n}{2} \sqrt{\frac{C_\delta-1}{n}+O\Big(\frac{1}{n^{1+c\xi/4}}\Big)}
\leq \frac{\sqrt n}{2} \bigg(\sqrt{\frac{C_\delta-1}{n}}+O\Big(\frac{1}{n^{1/2+c\xi/8}}\Big)\bigg)
=\frac{\sqrt{1-\delta}}{4}+O\Big(\frac{1}{n^{c\xi/8}}\Big)
\end{align*}
where the last equality follows since $C_\delta-1=5(1-\delta)/4 -(1-\delta)=(1-\delta)/4$. 
\end{proof}

\section{Lower Bounds on Mixing Time}\label{s:lower-bound}

We prove \autoref{t:lower-bound} in this section. 
The proof relies on the following lemma on variation distance.

\begin{lemma}\label{l:VD lower bound}
Let $G$ be a $d$-regular graph. 
Suppose there exist $S,T\subseteq V$ such that $|E(S,T)|\geq d|S||T|/n+\alpha\sqrt{|S||T|}$ for some $\alpha$. Then, for any $t\in\mathbb N$,
\[
d_{\TV}(t)\geq \frac12\cdot \Big(\frac{\alpha}{2d}\Big)^{2t}-\frac{\min\{|S|,|T|\}}{2n}.
\]
\end{lemma}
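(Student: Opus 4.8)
The plan is to lower bound $d_{\TV}(t)$ by exhibiting a single initial distribution whose $t$-step evolution remains far from $\pi$ in total variation. Since $d_{\TV}(t) = \max_p d_{\TV}(P^tp,\pi)$, it suffices to pick $p$ cleverly: I would take $p$ to be the uniform distribution on $S$ (so $p = \chi_S/|S|$) and then lower bound $d_{\TV}(P^tp,\pi) \geq \langle P^t p, \chi_T\rangle/? $ — more precisely, use the variational characterization $d_{\TV}(P^tp,\pi) = \max_{A\subseteq V} \big(P^tp(A) - \pi(A)\big)$, applied with $A = T$. Thus I need a lower bound on $(P^tp)(T) = \langle \chi_T, P^t \chi_S\rangle/|S|$ and to subtract $\pi(T) = |T|/n$.

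The core estimate is therefore a lower bound on $\langle \chi_T, P^t \chi_S\rangle$ in terms of $|E(S,T)|$. For $t=1$ this is immediate: $\langle \chi_T, P\chi_S\rangle = |E(S,T)|/d \geq |S||T|/n + (\alpha/d)\sqrt{|S||T|}$ by hypothesis. For general $t$ I would iterate: the walk can be routed to stay "between $S$ and $T$". One clean way is to note $\langle \chi_T, P^t\chi_S\rangle \geq \langle \chi_T, P\chi_T\rangle \cdot \langle \chi_T, P\chi_S\rangle / (\text{normalization})$-type inequalities, but the robust route is to bound each factor of $P$ contributing at least $|E(S,T)|/(d|S|)$ or $|E(S,T)|/(d|T|)$ worth of mass crossing. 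Concretely, since $|E(S,T)| \geq \alpha\sqrt{|S||T|}$ (dropping the nonnegative $d|S||T|/n$ term), at least an $\alpha\sqrt{|S||T|}/(d \cdot \max\{|S|,|T|\}) \geq (\alpha/d)\sqrt{\min\{|S|,|T|\}/\max\{|S|,|T|\}}$ fraction... this is getting fiddly; the cleaner bound the lemma wants is $\langle \chi_T, P^t\chi_S\rangle/\sqrt{|S||T|} \geq (\alpha/(2d))^{2t}\sqrt{|S||T|}$ roughly, i.e. each application of $P$ (forward to $T$, then $P$ back to $S$, repeated) contributes a factor $\geq (\alpha/(2d))^2$. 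I would set up the induction on pairs of steps: show $\langle \chi_T, P^2 \chi_T\rangle \geq (\text{fraction crossing to }S\text{ and back})$, using $|E(S,T)|/(d\sqrt{|S||T|}) \geq \alpha/d$ and Cauchy–Schwarz to convert edge counts into inner-product lower bounds, with the constant $2$ absorbing the slack.

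After obtaining $\langle \chi_T, P^t\chi_S\rangle \geq \sqrt{|S||T|}\cdot(\alpha/(2d))^{2t}$ — or more carefully $(\alpha/(2d))^{2t}\cdot|S|$ type bound so that dividing by $|S|$ gives $(P^tp)(T) \geq (\alpha/(2d))^{2t}$ — I would conclude
\[
d_{\TV}(t) \geq (P^tp)(T) - \frac{|T|}{n} \geq \frac12\Big(\frac{\alpha}{2d}\Big)^{2t} - \frac{|T|}{n},
\]
and symmetrically with $S$ and $T$ swapped (using $p = \chi_T/|T|$ and test set $S$) to replace $|T|$ by $\min\{|S|,|T|\}$ and to pick up the overall factor of $\tfrac12$ cleanly. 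The main obstacle is the induction on the "ping-pong" quantity $\langle \chi_T, P^t\chi_S\rangle$: making the per-step loss exactly $(\alpha/(2d))^2$ requires carefully tracking how much mass, when pushed from the $S$-side to the $T$-side via $P$, can be guaranteed (by Cauchy–Schwarz, since edges concentrate on high-mass vertices) to return, and ensuring the $d|S||T|/n$ contributions are handled so they only help. Once that recursive inequality is pinned down with the factor-of-2 buffer, the rest is the routing argument above plus Theorem-level bookkeeping to derive \autoref{t:lower-bound} by choosing $t$ just below $\log(1/\delta)/\log(d/\alpha)$.
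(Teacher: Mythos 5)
The key step your plan leaves unresolved is exactly the one the paper solves with a simple trick you have not found: \emph{pass to a minimal pair} $S,T$ satisfying $|E(S,T)|\geq d|S||T|/n+\alpha\sqrt{|S||T|}$. Minimality immediately upgrades the global edge-count hypothesis to a \emph{per-vertex degree} lower bound — for every $v\in S$, removing $v$ must kill the density condition, which forces $d_T(v)\geq\tfrac{\alpha}{2}\sqrt{|T|/|S|}$ (and symmetrically $d_S(u)\geq\tfrac{\alpha}{2}\sqrt{|S|/|T|}$). With $d_{\min}\geq\alpha^2/(4d)$ pointwise, the ping-pong induction you want becomes trivial: every vertex in $S$ (resp.\ $T$) receives at least a $d_{\min}/d$ fraction of the mass sitting on the other side, so $P^tU_S(v)\geq(d_{\min}/d)^t/|S|\geq(\alpha/(2d))^{2t}/|S|$ with no further work, and the TV bound follows by summing the pointwise deficits over $S$.

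Your proposal instead tries to iterate a bound on $\langle\chi_T,P^t\chi_S\rangle$ (or $\langle\chi_T,P^2\chi_T\rangle$) using the density condition plus Cauchy–Schwarz. The base case $t=1$ is fine, and $\|P\chi_T\|_2^2\geq |E(S,T)|^2/(d^2|S|)\geq(\alpha/d)^2|T|$ is the right Cauchy–Schwarz. But the inductive step is not there: after one application of $P$ the mass on $T$ is no longer uniform, and the hypothesis $|E(S,T)|\geq\cdots$ by itself controls only a total edge count, not how those edges are distributed among vertices. Without some additional structure (the paper's minimality, or alternatively a spectral/log-convexity argument showing $t\mapsto\|P^t\chi_S\|_2^2$ is log-convex so that $\langle\chi_S,P^{2t}\chi_S\rangle\geq|S|\cdot(\|P\chi_S\|_2^2/|S|)^t$), there is nothing that forces the returning mass to compound at rate $(\alpha/(2d))^2$ per round: the density could be carried by a few high-degree vertices, after which mass leaks and never comes back at the claimed rate. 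You flag this yourself (``this is getting fiddly,'' ``the main obstacle is the induction''), but you do not close it, and the phrase ``Cauchy–Schwarz, since edges concentrate on high-mass vertices'' does not by itself produce the needed recursive inequality. So this is a genuine gap, and the single missing idea is the reduction to minimal $S,T$. Your use of the test-set characterization $d_{\TV}\geq (P^tp)(T)-\pi(T)$ rather than the paper's pointwise sum over $S$ is a cosmetic difference and is fine; it is the induction that needs the new idea.
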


\begin{proof}
We argue that if $S,T$ form a dense bipartite structure, then the random walks starting at $U_S := \chi_S / |S|$ (the uniform distribution on $S$) should bounce back and forth between $S,T$, causing slow mixing time.

Assume without loss of generality that $S,T$ are minimal set that satisfies $|E(S,T)|\geq d|S||T|/n+\alpha\sqrt{|S||T|}$, and that $|S|=\min\{|S|,|T|\}$. 
First, we argue that a dense bipartite structure implies a lower bound on the degree for each $v\in S$ and $u\in T$. 
For any $v\in S$, by minimality,
\begin{align*}
|E(S,T)|-d_T(v)=|E(S\setminus \{v\}, T)|&<\frac{d(|S|-1)|T|}{n}+\alpha\sqrt{(|S|-1)|T|}
\\
&<\frac{d|S||T|}{n}+\alpha\sqrt{|S||T|}-\alpha\sqrt{|T|}(\sqrt{|S|}-\sqrt{|S|-1})
\\
&\leq |E(S,T)|-\frac\alpha2\sqrt{\frac{|T|}{|S|}},   
\end{align*}
where the last inequality uses $\sqrt{x}-\sqrt{x-1}>\frac{1}{2\sqrt x}$.
This implies a lower bound on $d_T(v)$ and similarly on $d_S(u)$ for $u\in T$ such that
\[
d_T(v)\geq \frac\alpha2\sqrt{\frac{|T|}{|S|}}
\quad\text{and}\quad 
d_S(v)\geq \frac\alpha2\sqrt{\frac{|S|}{|T|}}.
\]
Let $d_{\min} := \min\Big\{\frac\alpha2\sqrt{\frac{|T|}{|S|}}, \frac\alpha2\sqrt{\frac{|S|}{|T|}}\Big\}$. 
Since $d_{\min}\leq d$, it also follows that 
\[
|S|\geq \frac{\alpha^2}{4d^2}|T|
\quad\text{and}\quad 
|T|\geq \frac{\alpha^2}{4d^2}|S|
\quad\implies \quad 
d_{\min}\geq \frac{\alpha^2}{4d}.
\]
Consider the random walk starting at $U_S$.  
After one step of random walks, since $d_S(u)\geq d_{\min}$ for all $u\in T$, $PU_S(u)\geq d_{\min}/(d|S|)$. 
Similarly, for any $v\in S$, after another step of random walks, 
$P^2U_S(v)\geq d_{\min}^2/(d^2|S|)$. 
By induction, after the $t$-th step, for all $v\in S$,
\[
P^tU_S(v)
\geq \Big(\frac{d_{\min}}{d}\Big)^t\cdot \frac{1}{|S|}
\geq \Big(\frac{\alpha^2}{4d^2}\Big)^t\cdot \frac{1}{|S|}.
\]
It follows from the definition of $d_{\TV}(t)$ that
\[
d_{\TV}(t)
\geq \frac12 \|P^tU_S-\pi\|_1
\geq \frac12\sum_{v\in S}\Big|P^tU_S(v)-\frac1n\Big|
\geq \frac12 \cdot |S|\cdot \Big(\Big(\frac{\alpha^2}{4d^2}\Big)^t\cdot \frac{1}{|S|}-\frac1n\Big)
=\frac12\cdot \Big(\frac{\alpha}{2d}\Big)^{2t}-\frac{|S|}{2n}.
\]  
\end{proof}

\autoref{t:lower-bound} follows immediately from \autoref{l:VD lower bound}.

\begin{proof}[Proof of \autoref{t:lower-bound}]
It follows from \autoref{l:VD lower bound} and the assumption $|S|,|T|\leq \delta n$ that for any $t\in\mathbb N$, 
\[
d_{\TV}(t)
\geq \frac12\cdot \Big(\frac{\alpha}{2d}\Big)^{2t}-\frac{\min\{|S|,|T|\}}{2n}
\geq \frac12\cdot \Big(\frac{\alpha}{2d}\Big)^{2t}-\frac{\delta}{2}
\]
For $d_{\TV}(t)$ to be below $\frac1n$, we need
\[
\frac12\cdot \Big(\frac{\alpha}{2d}\Big)^{2t}-\frac{\delta}{2}\leq \frac1n
\quad\implies\quad 
t\geq \frac{\log(\delta+\frac2n)}{2\log \alpha/(2d)}\gtrsim\frac{\log\delta}{\log (\alpha/d)} 
= \frac{\log(1/\delta)}{\log(d/\alpha)}.
\]
\end{proof}

\subsection{Expanders and Small-Set Vertex Expanders} \label{s:SSE}

Some natural combinatorial conditions to consider for constant mixing time of a graph are its expansion properties, such as edge conductance and small-set vertex expansion. 

We discuss why they are not strong enough to attain constant mixing time in this section. 
The graphs that we construct have a small dense bipartite structure embedded while having near optimal edge conductance or small-set vertex expansion. 
It follows from \autoref{t:lower-bound} that it does not have constant mixing time.

\subsubsection*{Counterexample for Expanders}

\textbf{Example:} Let $G$ be a $d$-regular graph with $d\geq 4$, where there exist two small disjoint sets $S,T\subseteq V$ such that $|S|=|T|=n/(d+2) \leq n/d$.
Each vertex in $S$ has $d/2$ edges into $T$, and similarly each vertex in $T$ has $d/2$ edges into $S$.
The remaining $d/2$ edges of each $v\in S\cup T$ go to unique neighbours in $V\setminus S\cup T$. 
The induced subgraph $G[V\setminus S\cup T]$ forms a graph of degree $d-1$ with edge conductance $1/2$.

\begin{claim}
    $G$ has edge conductance at least $1/8$.
\end{claim}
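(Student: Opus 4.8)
The plan is to bound the edge conductance $\phi(G) = \min_{|A| \le n/2} |\delta(A)| / (d|A|)$ from below by $1/8$ by case analysis on how a candidate set $A$ interacts with the special sets $S$ and $T$. Write $A = A_0 \cup A_S \cup A_T$ where $A_0 = A \cap (V \setminus (S \cup T))$, $A_S = A \cap S$, and $A_T = A \cap T$. The idea is that edges leaving $A$ come from three sources: the expansion inside $G[V \setminus (S \cup T)]$ (which has edge conductance $1/2$ in the degree-$(d-1)$ induced graph), the $d/2$ ``private'' edges each vertex of $S \cup T$ sends into $V \setminus (S \cup T)$, and the $d/2$ edges between $S$ and $T$. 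As long as $A$ is not essentially all of $S \cup T$, one of these contributes enough boundary.

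First I would handle the case where $A \cap (S \cup T)$ is ``balanced'', i.e. $|A_S| \ge |S|/2$ forces us to look at whether $A_T$ is also large. If both $A_S$ and $A_T$ are large, say $|A_S|, |A_T| \ge |S|/2$ each, then since $|S| = |T| \le n/d$ we have $|A \cap (S\cup T)| \le 2n/d \le n/2$ is small relative to $A$ — but more usefully, in this regime I would instead note the private edges: each vertex $v \in (S \cup T) \setminus A$ receives at most... hmm, the cleaner route is: each vertex in $A_S$ has $d/2$ private edges to distinct vertices of $V \setminus (S \cup T)$, so either many of those endpoints lie outside $A$ (contributing to $\delta(A)$) or $A_0$ is large, in which case the $1/2$-conductance of $G[V \setminus (S\cup T)]$ gives $|\delta_{G[V\setminus(S\cup T)]}(A_0)| \ge \tfrac12 (d-1) \min\{|A_0|, |V \setminus (S\cup T)| - |A_0|\}$, and I relate $|A_0|$ to $|A|$ using that $|A \cap (S \cup T)| \le |S| + |T| \le 2n/d \le |A|/2$ whenever $|A_0| \ge |A|/2$ (otherwise handle $A_0$ small separately). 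The second main case is $A \cap (S \cup T)$ ``imbalanced'', e.g. $A_S$ large but $A_T$ small: then the $S$–$T$ edges incident to $A_S$ that land in $T \setminus A_T$ all cross the boundary, giving $\ge \tfrac{d}{2}|A_S| - d\,|A_T| \ge \tfrac{d}{4}|A_S|$ boundary edges when $|A_T| \le |A_S|/4$; combined with the private-edge/induced-expansion bound for $A_0$, this covers $d|A|$ up to the constant $1/8$.

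Concretely I would prove: $|\delta(A)| \ge \tfrac12(d-1)\min\{|A_0|, |V\setminus(S\cup T)|-|A_0|\} + (\text{private + } S\text{-}T \text{ contributions})$, then show that in every case the right-hand side is at least $\tfrac{d}{8}|A|$. The bookkeeping splits on (i) $|A_0| \le |A|/2$ versus $|A_0| > |A|/2$, and within each, on whether $(|A_S|, |A_T|)$ is balanced or one dominates; the factor $1/8$ absorbs the $(d-1)$ vs $d$ discrepancy, the loss from the balanced-$S$-$T$ case where $S$–$T$ edges don't help, and the $\min$ in the induced-conductance term when $A_0$ is more than half of $V \setminus (S \cup T)$ (here I use $|S|+|T| \le 2n/d$ so $A_0$ large still means $A_0$ is not close to all of its side unless $|A|$ is within $O(n/d)$ of $n/2$, a boundary regime I check directly).

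I expect the main obstacle to be the case analysis around a set $A$ that is roughly half the vertices with $A_0$ taking up almost all of $V \setminus (S \cup T)$: there the internal expansion term's $\min$ can be small, so I must lean on the private edges of the $\Theta(n/d)$ vertices of $S \cup T$ not in $A$, and check that even this small population supplies $\Omega(d \cdot |A|) = \Omega(d \cdot n)$ boundary edges — which works because each contributes $d/2$ private edges, total $\Omega(n)$, but I need to confirm these don't get double-counted against the induced term and that the constant survives. Getting the constant exactly $1/8$ rather than something worse will require being slightly careful about which contributions I add versus take a max over, but no single estimate is deep.
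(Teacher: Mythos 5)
Your decomposition $A = A_0 \cup A_S \cup A_T$ and the plan to play the ``private'' edges against the induced-subgraph conductance is in the same spirit as the paper's proof, but you have made it substantially more complicated than it needs to be, and as written it is a sketch rather than a proof: you yourself flag several unresolved obstacles (``I need to confirm these don't get double-counted,'' ``getting the constant exactly $1/8$\ldots will require being slightly careful'').

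The key simplification you are missing is that the split of $A\cap(S\cup T)$ into $A_S$ and $A_T$ is entirely unnecessary, and so is any consideration of the $S$--$T$ edges. Write $W_1 = A\cap(S\cup T)$ and $W_2 = A\cap(V\setminus(S\cup T))$. Because every vertex of $S\cup T$ sends its $d/2$ remaining edges to \emph{distinct, unique} vertices of $V\setminus(S\cup T)$, two facts fall out immediately: (i) $|\delta(W_1)| \geq \tfrac{d}{2}|W_1|$ (every private edge leaves $W_1$, and this holds whether or not $W_1$ contains all of $S$, all of $T$, or neither); and (ii) $|E(W_1,W_2)| \leq |W_2|$ (each vertex of $W_2$ receives at most one private edge). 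So $|\delta(A)| \geq |\delta(W_1)| - |E(W_1,W_2)| \geq \tfrac{d}{2}|W_1| - |W_2|$. With only the dichotomy $|W_1| \geq |W_2|$ versus $|W_2| \geq |W_1|$, the first case is handled by the displayed inequality and the second by the induced-subgraph conductance applied to $W_2$, giving $\tfrac{d}{8}|A|$ in both. Your proposal instead branches on balanced vs.\ imbalanced $(A_S, A_T)$ and tries to harvest $S$--$T$ cross-edges, which is extra work to recover information the private edges already give for free.

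One thing you notice that the paper's write-up glosses over: the induced-conductance bound for $W_2$ really gives $|\delta(W_2)| \geq \tfrac{d-1}{2}\min\{|W_2|, |V\setminus(S\cup T)| - |W_2|\}$, and when $|W_2|$ exceeds half of $V\setminus(S\cup T)$ one must use the complement. This is fixable (for $|A|\leq n/2$ one has $|V\setminus(S\cup T)| - |W_2| \geq \tfrac{(d-2)n}{2(d+2)} \geq \tfrac{d-2}{d+2}|A|$, which for $d\geq 4$ still yields $\tfrac{d}{8}|A|$), but you correctly identify it as a point requiring care. Still, to count as a proof you would need to actually carry out the case analysis and land on the constant, which you have not done.
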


\begin{proof}
Let $W\subseteq V$. Let $W_1=W\cap (S\cup T)$, and let $W_2=W\cap (V\setminus S\cap T)$. 
By construction, $|\delta(W_1)|\geq \frac d2|W_1|$ and $|\delta(W_2)|\geq \frac{d-1}{2} |W_2|$. 
Assume $|W_1|\geq |W_2|$. Since $|E(W_1,W_2)|\leq |W_2|$ by construction, it follows that
\[
|\delta(W)|\geq |\delta(W_1)|-|W_2|\geq \frac d2 \cdot |W_1|-|W_1|\geq \frac d4 \cdot |W_1|\geq \frac d8 \cdot |W|,
\]
where the second last inequality holds for any $d\geq 4$. 
Assume $|W_2|\geq |W_1|$, then
\[
|\delta(W)|\geq |\delta_{V\setminus S\cup T}(W_2)|\geq \frac{d-1}{2}\cdot |W_2|\geq \frac d4\cdot |W_1|\geq \frac d8\cdot |W|.
\]
We conclude that $G$ has edge conductance at least $1/8$.
\end{proof}

It remains to see that $G$ has a small dense bipartite structure.

\begin{claim}
    $$|E(S,T)|\geq \frac{d|S||T|}{n}+\frac d4\sqrt{|S||T|}$$
\end{claim}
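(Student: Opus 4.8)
The plan is to compute both sides of the claimed inequality directly from the construction. By construction $|S| = |T| = n/(d+2)$, so $|S||T| = n^2/(d+2)^2$ and $\sqrt{|S||T|} = n/(d+2)$. Each vertex in $S$ has exactly $d/2$ edges into $T$, so $|E(S,T)| = (d/2)|S| = (d/2)\cdot n/(d+2)$ (counting ordered pairs: by symmetry summing degrees from the $T$ side gives the same value, consistent with the definition of $E(S,T)$ for disjoint sets). Meanwhile the "expected" term is $d|S||T|/n = d \cdot n/(d+2)^2$.

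So I would show
\[
|E(S,T)| - \frac{d|S||T|}{n}
= \frac{dn}{2(d+2)} - \frac{dn}{(d+2)^2}
= \frac{dn}{d+2}\left(\frac12 - \frac{1}{d+2}\right)
= \frac{dn}{d+2}\cdot\frac{d}{2(d+2)}
= \frac{d}{2}\cdot\frac{dn}{(d+2)^2},
\]
and then observe that $\frac{d}{2}\cdot\frac{dn}{(d+2)^2} \geq \frac{d}{4}\cdot\frac{n}{d+2} = \frac{d}{4}\sqrt{|S||T|}$ precisely when $\frac{d}{2(d+2)} \geq \frac14$, i.e. when $2d \geq d+2$, i.e. $d \geq 2$, which holds since $d \geq 4$. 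This gives the claim.

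There is no real obstacle here — it is a one-line arithmetic verification. The only mild subtlety worth stating explicitly in the writeup is the convention that $|E(S,T)|$ counts ordered edges, so that for the disjoint sets $S,T$ in this construction $|E(S,T)| = \sum_{v \in S} d_T(v) = (d/2)|S|$; I would make that count explicit so the reader does not worry about a factor of two. I would also note in passing that $|S|,|T| \le n/d$, so $S,T$ are "small" in the sense needed to invoke \autoref{t:lower-bound}, concluding that $G$ — despite having edge conductance at least $1/8$ — has non-constant mixing time.
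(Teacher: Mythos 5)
Your proof is correct and takes essentially the same route as the paper: both compute $|E(S,T)| = (d/2)|S|$ from the construction and compare against the right-hand side using $|S|=|T|$. The only difference is that you plug in the exact value $|S|=n/(d+2)$ and solve explicitly (getting the marginally weaker requirement $d\geq 2$), whereas the paper uses the cleaner bound $|S|\leq n/d$ to get $d|S||T|/n \leq |S|$ and then checks $1 + d/4 \leq d/2$ for $d\geq 4$.
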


\begin{proof}
    Since $|S|=|T|\leq n/d$, we have
    $$\frac{d|S||T|}{n}+\frac d4\sqrt{|S||T|}\leq |S|+\frac d4|S|\leq \frac d2|S|=|E(S,T)|$$
    where the last inequality follows for any $d\geq 4$.
\end{proof}

It follows from \autoref{t:lower-bound} that $G$ does not have constant mixing time.

\subsubsection*{Counterexample for Small-Set Vertex Expanders}

To see that near-optimal small set vertex expansion also fails to attain constant mixing time, consider the same graph construction, except now the induced subgraph $G[V\setminus S\cup T]$ is small-set vertex expander of size $dn/(d+2)$ and degree $d-1$, 
where each set of size at most $dn/((d+2)(d-1))$ has vertex expansion at least $d/2$. 
The small dense bipartite structure $(S,T)$ still exists in $G$, 
so $G$ does not have constant mixing time by the same argument above. 
It suffices to check that $G$ has near optimal small-set vertex expansion.

\begin{claim}
For every subset $W$ of size at most $n/d$, its vertex expansion in $G$ is at least $d/8$.
\end{claim}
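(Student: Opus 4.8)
The plan is to decompose an arbitrary set $W$ with $|W| \le n/d$ into its part $W_1 := W \cap (S \cup T)$ inside the dense bipartite gadget and its part $W_2 := W \cap (V \setminus (S \cup T))$ in the expander core, and to argue that most of the neighborhood of $W$ already comes from $W_2$'s expansion inside $G[V \setminus (S \cup T)]$, except possibly for a small correction coming from the $d/2$ matching edges that join $W_1$ to the core. First I would handle the easy case $|W_2| \ge |W_1|$: here $W_2$ has size at most $n/d \le dn/((d+2)(d-1))$ (for $d \ge 4$), so by hypothesis its vertex expansion inside the core is at least $d/2$, giving $|\partial_G(W)| \ge |\partial_{V \setminus (S \cup T)}(W_2)| - |W_1| \ge \tfrac d2 |W_2| - |W_1| \ge (\tfrac d2 - 1)|W_2| \ge \tfrac d4 |W_2| \ge \tfrac d8 |W|$, where I subtract $|W_1|$ to account for core-neighbors of $W_2$ that happen to lie in $W_1$ and I use $|W_1| \le |W_2|$.

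The case $|W_1| \ge |W_2|$ is where the real content is. Each vertex $v \in W_1$ sends $d/2$ edges to \emph{unique} neighbors in $V \setminus (S \cup T)$, and these matching-target sets are disjoint across different $v$ (by the "unique neighbours" clause in the construction), so the union of these targets has size exactly $\tfrac d2 |W_1|$; at most $|W_2|$ of them can lie in $W_2$ and at most... actually I want them to lie outside $W$. The targets lie in $V \setminus (S \cup T)$, and $W \cap (V \setminus (S \cup T)) = W_2$ has size at most $|W_1|$, so at most $|W_2| \le |W_1|$ of the $\tfrac d2 |W_1|$ targets are inside $W$, leaving at least $(\tfrac d2 - 1)|W_1| \ge \tfrac d4 |W_1|$ of them as genuine outside-neighbors, i.e. $|\partial_G(W)| \ge \tfrac d4 |W_1| \ge \tfrac d8 |W|$ (using $|W_2| \le |W_1|$ so $|W| \le 2|W_1|$). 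The step $\tfrac d2 - 1 \ge \tfrac d4$ needs $d \ge 4$, which is assumed.

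Combining the two cases gives $\psi_G(W) = |\partial_G(W)|/|W| \ge d/8$ for every $W$ with $|W| \le n/d$, which is the claim. The main obstacle — really the only thing requiring care — is making sure the matching edges from $W_1$ to the core have distinct endpoints so that they contribute $\tfrac d2 |W_1|$ \emph{distinct} vertices to $\partial_G(W_1)$, and double-checking that the cross terms $|W_1|$ versus $|W_2|$ are subtracted on the correct side in each case; the degree bound $d \ge 4$ is exactly what is needed to absorb these $-1$ corrections into the constant $1/8$. Everything else is a direct size count, so I expect no further difficulty.
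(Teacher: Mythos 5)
Your proof follows essentially the same decomposition as the paper's: split $W$ into $W_1 = W \cap (S\cup T)$ and $W_2$ in the core, handle the two cases $|W_1| \geq |W_2|$ and $|W_2| \geq |W_1|$ symmetrically, and use the unique matching edges into the core together with the core's small-set vertex expansion. Your bookkeeping is a bit tighter, since you bound $|\partial_G(W)|$ directly rather than first bounding $|N(W)|$ and then subtracting all of $|W|$; this buys you $d \geq 4$ where the paper's last step needs $d \geq 8$.

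There is, however, one concrete arithmetic slip: the inequality $n/d \leq dn/((d+2)(d-1))$ is false for $d\geq 3$. Rearranging gives $(d+2)(d-1) \leq d^2$, i.e.\ $d^2+d-2\leq d^2$, i.e.\ $d\leq 2$; for $d\geq 3$ the inequality goes the other way, with a gap of order $n/d^2$. So the small-set expansion hypothesis on the core, which is stated only for core sets of size at most $dn/((d+2)(d-1))$, does not automatically apply to every $W_2$ with $|W_2|\leq n/d$. (The paper's own proof is silent on this same point rather than asserting the false inequality; one would either assume the core expands on sets up to size $n/d$, or restrict the claim to $|W|\leq dn/((d+2)(d-1))$.) A lesser issue: in the case $|W_2|\geq|W_1|$ you subtract $|W_1|$ from $|\partial_{V\setminus(S\cup T)}(W_2)|$ ``to account for core-neighbors of $W_2$ that happen to lie in $W_1$,'' but no such vertices exist --- $\partial_{V\setminus(S\cup T)}(W_2)$ is contained in the core and hence is disjoint from $W_1\subseteq S\cup T$, and it excludes $W_2$ by definition, so $\partial_{V\setminus(S\cup T)}(W_2)\subseteq\partial_G(W)$ with no correction needed. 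Your inequality still holds (you subtracted a nonnegative quantity from a lower bound), but the stated justification is not the right one.
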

\begin{proof}
Let $W\subseteq V$ with $|W|\leq n/d$, define $N(W)=\{u\in V\mid\exists v\in W\text{ with } uv\in E\}$ as the neighbor set of $W$. We prove the claim by lower bounding $N(W)$. Let $W_1=W\cap (S\cup T)$ and $W_2=W\cap (V\setminus S\cap T)$.
By construction, $|N(W_1)|\geq d|W_1|/2$ and $|N(W_2)|\geq d|W_2|/2$. 
Assume $|W_1|\geq |W_2|$. 
Since $|N(W_1)|\geq d|W_1|/2$ and $2|W_1|\geq |W_1|+|W_2|=|W|$, 
\[
|N(W)|\geq |N(W_1)|\geq \frac d4\cdot 2|W_1|\geq \dfrac d4 \cdot |W|.
\]
The case where $|W_2|\geq |W_1|$ follows similarly. 

Therefore, we can lower bound the vertex expansion as
$$|\partial(W)|\geq |N(W)|-|W|\geq (\frac d4-1)|W|\geq \frac d8\cdot |W|$$
where the last inequality holds for any $d\geq 8$.
\end{proof}

\subsection*{Acknowledgement}
We thank Thatchaphol Saranurak for discussions on length-constrained expanders that inspired the study in this work.

\bibliography{reference}

@book {LPW09,
    AUTHOR = {Levin, David A. and Peres, Yuval and Wilmer, Elizabeth L.},
     TITLE = {Markov chains and mixing times},
      NOTE = {With a chapter by James G. Propp and David B. Wilson},
 PUBLISHER = {American Mathematical Society, Providence, RI},
      YEAR = {2009},
     PAGES = {xviii+371},
      ISBN = {978-0-8218-4739-8},
   MRCLASS = {60J10 (60-01 60J05 60K35 60K37 68U20 68W20)},
  MRNUMBER = {2466937},
MRREVIEWER = {Olle\ H\"aggstr\"om},
       DOI = {10.1090/mbk/058},
       URL = {https://doi.org/10.1090/mbk/058},
}

@inproceedings {AC88,
    AUTHOR = {Alon, N. and Chung, F. R. K.},
     TITLE = {Explicit construction of linear sized tolerant networks},
 BOOKTITLE = {Proceedings of the {F}irst {J}apan {C}onference on {G}raph
              {T}heory and {A}pplications ({H}akone, 1986)},
   JOURNAL = {Discrete Math.},
  FJOURNAL = {Discrete Mathematics},
    VOLUME = {72},
      YEAR = {1988},
    NUMBER = {1-3},
     PAGES = {15--19},
      ISSN = {0012-365X,1872-681X},
   MRCLASS = {05C38 (05C35 90B10)},
  MRNUMBER = {975519},
       DOI = {10.1016/0012-365X(88)90189-6},
       URL = {https://doi.org/10.1016/0012-365X(88)90189-6},
}

@article {BL06,
    AUTHOR = {Bilu, Yonatan and Linial, Nathan},
     TITLE = {Lifts, discrepancy and nearly optimal spectral gap},
   JOURNAL = {Combinatorica},
  FJOURNAL = {Combinatorica. An International Journal on Combinatorics and
              the Theory of Computing},
    VOLUME = {26},
      YEAR = {2006},
    NUMBER = {5},
     PAGES = {495--519},
      ISSN = {0209-9683,1439-6912},
   MRCLASS = {05C50},
  MRNUMBER = {2279667},
MRREVIEWER = {Sebastian\ M.\ Cioab\u a},
       DOI = {10.1007/s00493-006-0029-7},
       URL = {https://doi.org/10.1007/s00493-006-0029-7},
}

@incollection {Cheeger,
    AUTHOR = {Cheeger, Jeff},
     TITLE = {A lower bound for the smallest eigenvalue of the {L}aplacian},
 BOOKTITLE = {Problems in analysis ({S}ympos. in honor of {S}alomon
              {B}ochner, {P}rinceton {U}niv., {P}rinceton, {N}.{J}., 1969)},
     PAGES = {195--199},
 PUBLISHER = {Princeton Univ. Press, Princeton, NJ},
      YEAR = {1970},
   MRCLASS = {58G99 (53C20)},
  MRNUMBER = {402831},
MRREVIEWER = {Harold\ Donnelly},
}

@incollection {MR875835,
    AUTHOR = {Alon, N.},
     TITLE = {Eigenvalues and expanders},
      NOTE = {Theory of computing (Singer Island, Fla., 1984)},
   JOURNAL = {Combinatorica},
  FJOURNAL = {Combinatorica. An International Journal of the J\'anos Bolyai
              Mathematical Society},
    VOLUME = {6},
      YEAR = {1986},
    NUMBER = {2},
     PAGES = {83--96},
      ISSN = {0209-9683},
   MRCLASS = {05C50},
  MRNUMBER = {875835},
MRREVIEWER = {Claude\ Benzaken},
       DOI = {10.1007/BF02579166},
       URL = {https://doi.org/10.1007/BF02579166},
}

@article {MR782626,
    AUTHOR = {Alon, N. and Milman, V. D.},
     TITLE = {{$\lambda_1,$} isoperimetric inequalities for graphs, and
              superconcentrators},
   JOURNAL = {J. Combin. Theory Ser. B},
  FJOURNAL = {Journal of Combinatorial Theory. Series B},
    VOLUME = {38},
      YEAR = {1985},
    NUMBER = {1},
     PAGES = {73--88},
      ISSN = {0095-8956,1096-0902},
   MRCLASS = {05C50},
  MRNUMBER = {782626},
       DOI = {10.1016/0095-8956(85)90092-9},
       URL = {https://doi.org/10.1016/0095-8956(85)90092-9},
}

@inproceedings {HHG25,
    AUTHOR = {Haeupler, Bernhard and Huebotter, Jonas and Ghaffari, Mohsen},
     TITLE = {A cut-matching game for constant-hop expanders},
 BOOKTITLE = {Proceedings of the 2025 {A}nnual {ACM}-{SIAM} {S}ymposium on
              {D}iscrete {A}lgorithms ({SODA})},
     PAGES = {1651--1678},
 PUBLISHER = {SIAM, Philadelphia, PA},
      YEAR = {2025},
      ISBN = {978-1-61197-832-2},
   MRCLASS = {68R10},
  MRNUMBER = {4863466},
       DOI = {10.1137/1.9781611978322.51},
       URL = {https://doi.org/10.1137/1.9781611978322.51},
}

@inproceedings {HRG22,
    AUTHOR = {Haeupler, Bernhard and R\"acke, Harald and Ghaffari, Mohsen},
     TITLE = {Hop-constrained expander decompositions, oblivious routing,
              and distributed universal optimality},
 BOOKTITLE = {S{TOC} '22---{P}roceedings of the 54th {A}nnual {ACM} {SIGACT}
              {S}ymposium on {T}heory of {C}omputing},
     PAGES = {1325--1338},
 PUBLISHER = {ACM, New York},
      YEAR = {[2022] \copyright 2022},
      ISBN = {978-1-4503-9264-8},
   MRCLASS = {68W15},
  MRNUMBER = {4490082},
}

@article {Tanner,
    AUTHOR = {Tanner, R. Michael},
     TITLE = {Explicit concentrators from generalized {$N$}-gons},
   JOURNAL = {SIAM J. Algebraic Discrete Methods},
  FJOURNAL = {Society for Industrial and Applied Mathematics. Journal on
              Algebraic and Discrete Methods},
    VOLUME = {5},
      YEAR = {1984},
    NUMBER = {3},
     PAGES = {287--293},
      ISSN = {0196-5212},
   MRCLASS = {68R10 (05C40 94C10)},
  MRNUMBER = {752035},
       DOI = {10.1137/0605030},
       URL = {https://doi.org/10.1137/0605030},
}

@article {Alon-Boppana,
    AUTHOR = {Nilli, A.},
     TITLE = {On the second eigenvalue of a graph},
   JOURNAL = {Discrete Math.},
  FJOURNAL = {Discrete Mathematics},
    VOLUME = {91},
      YEAR = {1991},
    NUMBER = {2},
     PAGES = {207--210},
      ISSN = {0012-365X,1872-681X},
   MRCLASS = {05C50},
  MRNUMBER = {1124768},
MRREVIEWER = {Wan\ Di\ Wei},
       DOI = {10.1016/0012-365X(91)90112-F},
       URL = {https://doi.org/10.1016/0012-365X(91)90112-F},
}

@article {LPS,
    AUTHOR = {Lubotzky, A. and Phillips, R. and Sarnak, P.},
     TITLE = {Ramanujan graphs},
   JOURNAL = {Combinatorica},
  FJOURNAL = {Combinatorica. An International Journal of the J\'anos Bolyai
              Mathematical Society},
    VOLUME = {8},
      YEAR = {1988},
    NUMBER = {3},
     PAGES = {261--277},
      ISSN = {0209-9683},
   MRCLASS = {05C75 (05C25 05C50)},
  MRNUMBER = {963118},
MRREVIEWER = {Dave\ Witte\ Morris},
       DOI = {10.1007/BF02126799},
       URL = {https://doi.org/10.1007/BF02126799},
}

\end{document}